\numberwithin{equation}{section}
\numberwithin{figure}{section}
\numberwithin{table}{section}
\newtheorem{theorem}{Theorem}[section]
\newtheorem{definition}{Definition}[section]
\newtheorem{note}{Note}[section]
\numberwithin{equation}{section}
\begin{document}

\begin{center}

{\Large\bf Detecting intrusions in control systems: a rule of thumb, its justification and illustrations}

\vspace*{7mm}

{\large Nadezhda Gribkova}

\medskip

\textit{Faculty of Mathematics and Mechanics, St.\,Petersburg State University,\\ St.\,Petersburg 199034, Russia}

\bigskip

{\large Ri\v cardas Zitikis}

\medskip

\textit{School of Mathematical and Statistical Sciences,
Western University, \break London, Ontario N6A 5B7, Canada}

\end{center}

\bigskip

\begin{quote}

{\bf Abstract.} Control systems are exposed to unintentional errors, deliberate intrusions, false data injection attacks, and various other disruptions. In this paper we propose, justify, and illustrate a rule of thumb for detecting, or confirming the absence of, such disruptions. To facilitate the use of the rule, we rigorously discuss background results that delineate the boundaries of the rule's applicability. We also discuss ways to further widen the applicability of the proposed intrusion-detection methodology.

\medskip

{\it Key words and phrases:} control system, transfer function, intrusion, false data injection, concomitant.

\medskip

%

\end{quote}


\section{Introduction}
\label{intro}

Computer systems monitor and control a myriad of physical processes, and their protection against random errors, deliberate intrusions (e.g., Denning, 1987; Debar et al., 1999; C\'{a}rdenas et al., 2011; Premathilaka et al., 2013; and references therein), false data injections (e.g., Liang, 2017; and references therein), and other disruptors has become of much interest. A number of sophisticated methods have been suggested in the literature for tackling such problems, including probabilistic (e.g., Huang et al., 2016; Onoda,~2016), deep learning (e.g., He et al.,~2017), and artificial neural networks based methods (e.g., Potluri,~2017), to name a few. The aim of the present paper is to describe, justify, and illustrate a simple-to-formulate and quick-to-implement procedure for detecting intrusions in control systems. The procedure is a natural offspring of extensive probabilistic and statistical explorations by Davydov and Zitikis (2007, 2017), Chen at al. (2018), and Gribkova and Zitikis (2018). Details follow.

Suppose we are dealing with a control system, which intakes random variables $X_{1},\dots , X_{n}$ and outputs their transformations $Y^0_{1},\dots , Y^0_{n}$; we use the superscript ``0'' to indicate that the outputs have not been compromised.
\begin{figure}[t!] \bigskip
\centering
\tikzstyle{line} = [draw, -stealth, thick]
\tikzstyle{block} = [draw, rectangle, text width=11em, text centered, minimum height=15mm, node distance=10em]
\begin{tikzpicture}
\node [block] (filter) {Transfer function $h(x)$};
\node [left of=filter, xshift=-14em] (inputs) {};
\node [right of=filter, xshift=14em] (outputs) {};
\path [line] (inputs) -- node[yshift=1em, xshift=-1em] {$\dots , X_{1},\dots , X_{n}\sim F$} (filter);
\path [line] (filter) -- node[yshift=1em, xshift=0em] {$\dots , Y_{1}^0,\dots , Y_{n}^0$} (outputs);
\end{tikzpicture}
\caption{A control system.}
	\label{fig-1}
\end{figure}
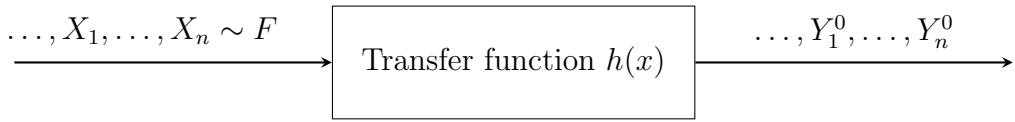
There is a transfer function $h(x)$ associated with this control system (Figure~\ref{fig-1}), and thus  the outputs are of the form
\begin{equation}\label{eq-0}
Y_i^0:=h(X_i), \quad i=1,\dots , n.
\end{equation}
Assume that the inputs $X_{1},\dots , X_{n}$ have been pre-whitened, and thus are independent and identically distributed (iid) random variables, whose marginal cumulative distribution functions (cdf's) we denote by $F(x)$.

Filters have pre-specified transfer windows, which are usually intervals $(a,b]$ for some real numbers $a<b$. Hence, the associated transfer function $h(x)$ maps the interval $(a,b]$ to the set of real numbers. We assume that the cdf $F(x)$ is supported by $(a,b]$, that is, $F(x)\in (0,1)$ for all $x\in (a,b)$, with the boundary values $F(a)=0$ and $F(b)=1$. Furthermore, we assume that the cdf $F(x)$ is strictly increasing on $(a,b]$, and thus generates data that can potentially fill in every part of the transfer window. Even though the methodology developed in this paper allows for various transfer functions, to facilitate clarity of the following arguments, we work with functions $h(x)$ that have continuous and bounded first derivatives.

The outputs $Y^0_{1},\dots , Y^0_{n}$ may, however, be compromised by intrusion variables, which we denote by $\varepsilon_{1},\dots , \varepsilon_{n}\sim F_{\varepsilon}$. We assume that they are iid random variables, independent of the inputs $X_{1},\dots , X_{n}$, and have means $\mu_{\varepsilon}=0 $ and finite variances $\sigma_{\varepsilon}^2<\infty $. These are, of course, the classical assumptions associated with errors in dynamical systems such as time series (e.g., Tong, 1990; Box et al., 2015).
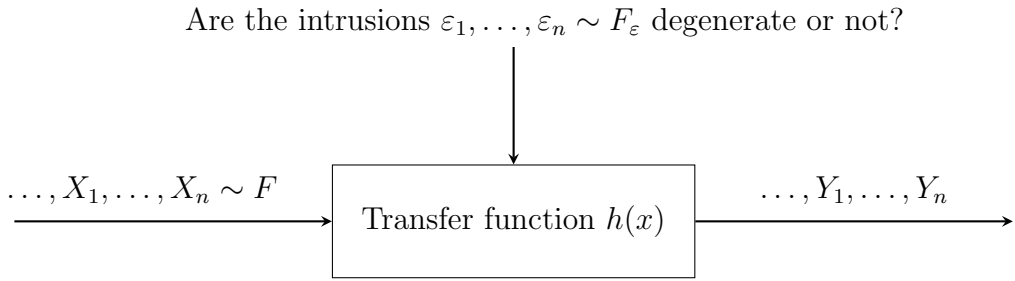
\begin{figure}[t!] \bigskip
\centering
\tikzstyle{line} = [draw, -stealth, thick]
\tikzstyle{block} = [draw, rectangle, text width=11em, text centered, minimum height=15mm, node distance=10em]
\begin{tikzpicture}
\node [block] (filter) {Transfer function $h(x)$};
\node [left of=filter, xshift=-14em] (inputs) {};
\node [right of=filter, xshift=14em] (outputs) {};
\node [above of=filter, yshift=4em] (errors) {\qquad Are the intrusions $\varepsilon_{1},\dots , \varepsilon_{n}\sim F_{\varepsilon} $ degenerate or not?};
\path [line] (errors) -- (filter);
\path [line] (inputs) -- node[yshift=1em, xshift=-1em] {$\dots , X_{1},\dots , X_{n}\sim F$} (filter);
\path [line] (filter) -- node[yshift=1em, xshift=0em] {$\dots , Y_{1},\dots , Y_{n}$} (outputs);
\end{tikzpicture}
\caption{The control system with a potential intrusion.}
	\label{fig-2}
\end{figure}
That is, the actual outputs are realizations of the random variables (Figure~\ref{fig-2})
\begin{equation}\label{eq-1}
Y_i:=Y_i^0+\varepsilon_i, \quad i=1,\dots , n.
\end{equation}
To find out whether or not the system is being compromised is the same -- in the context of the present paper -- as testing whether or not the cdf $F_{\varepsilon}$ is non-degenerate (i.e., $\varepsilon_i$'s are not equal to $0$) or degenerate (i.e., $\varepsilon_i$'s are equal to $0$). In the former case, we say that the intrusion variables are present, and in the latter case, we say that they are absent. Hence, in practical uses of the herein developed recommendations, the presence and the absence of intrusion variables should be  understood in the statistical sense.

Intuitively (e.g., C\'{a}rdenas et al., 2011, p.~360), in order to distinguish between the two situations, the outputs $Y^0_{1},\dots , Y^0_{n}$ under no intrusion should be in some  reasonable order so that incoming intrusion variables would disrupt the order and in this way make them detectible. The following definition clarifies what we mean by ``reasonable order'' within the context of the present research.

\begin{definition}\label{order}\rm
We say that the non-contaminated outcomes $Y_i^0=h(X_i)$, $ i=1,\dots , n$, are  \textit{in reasonable order} when the random sequence $B_n^0$, $n\ge 1$, defined by 
\begin{equation}
\label{condition-0}
B_n^0:={1\over \sqrt{n}} \sum_{i=2}^n \big | h(X_{i:n})-h(X_{i-1:n}) \big |
\end{equation}
is asymptotically bounded in probability, that is, when $B_n^0=O_{\mathbf{P}}(1)$ as $n\to \infty $, where $X_{1:n}\le \cdots \le X_{n:n}$ are the ordered inputs $X_1,\dots , X_n$. If, however, $B_n^0=O_{\mathbf{P}}(1)$ does not hold (e.g., when $B_n^0\stackrel{\mathbf{P}}{\to} \infty $), then we call the outputs \textit{out of reasonable order}.
\end{definition}

To work out intuition on this definition, consider first the case when the outputs are very chaotic in the sense that they are zigzagging up and down in a steady fashion when going from one order statistic $X_{i-1:n}$  to the next one $X_{i:n}$. Such outputs, obviously, are out of reasonable order in the sense of the above definition.

If, on the other hand, the function $h\circ F^{-1}(u)$ is smooth and well-behaving in the sense that its derivative $(\mathrm{d}/\mathrm{d}u)h\circ F^{-1}(u)$ exists and is bounded on its domain of definition $[0,1]$, then the outputs are in reasonable order as the following computations show:
\begin{align}
B_n^0&={1\over \sqrt{n}} \sum_{i=2}^n \big | h(X_{i:n})-h(X_{i-1:n}) \big |
\notag \\
&={1\over \sqrt{n}}\sum_{i=2}^n \big | h\circ F^{-1}(U_{i:n})-h\circ F^{-1}(U_{i-1:n})\big |
\notag \\
&={1\over \sqrt{n}}\sum_{i=2}^n \bigg | {\mathrm{d}\over \mathrm{d}t} h\circ F^{-1}(\nu_i)\big (U_{i:n}-U_{i-1:n}\big )\bigg |
\notag \\
&= O_{\mathbf{P}}(n^{-1/2}),
\label{condition-0b}
\end{align}
where $U_{i:n}$, $i=1,\dots , n$, are the order statistics of the uniform on $[0,1]$ random variables  $U_i:=F(X_i)$, $i=1,\dots , n$, and $\nu_i$ is a random variable taking values between $U_{i-1:n}$ and $U_{i:n} $. Note that  the assumed boundedness of the derivative $(\mathrm{d}/\mathrm{d}u)h\circ F^{-1}(u)$ results in the much stronger asymptotic property $B_n^0= O_{\mathbf{P}}(n^{-1/2})$ than the required $B_n^0= O_{\mathbf{P}}(1)$. This suggest that the assumption can be relaxed, which is indeed possible as we shall soon see.

\begin{note}\rm
In what follows, if a random sequence $\xi_n$ is such that $\xi_n=O_{\mathbf{P}}(1)$, then we simply say that $\xi_n$ is asymptotically bounded, without specifying ``in probability.'' If $\xi_n\stackrel{\mathbf{P}}{\to} c$, then we say that $\xi_n$ converges to $c$. Finally, if $\xi_n\stackrel{\mathbf{P}}{\to} \infty $, we say that $\xi_n$ tends to infinity.
\end{note}

We have organized the rest of the paper as follows. In Section~\ref{rule}, we formulate, justify, and illustrate a rule of thumb for detecting unintentional (e.g., measurement) errors as well as deliberate intrusions (e.g., false data injections) in control systems. In Section~\ref{determ}, we explore the rule when, due to a variety of reasons such as vulnerability testing, the system is fed artificially-deigned deterministic inputs, instead of the usual random inputs $X_1,\dots , X_n$. Section~\ref{conclude} concludes the paper with a brief summary of main contributions.

\section{A rule of thumb and its justification}
\label{rule}

The rule of thumb that we shall formulate in a moment relies on the asymptotic behaviour of the quantities 
\[
I_n:={1\over B_n \sqrt{n}} \sum_{i=2}^n \big ( Y_{i,n}-Y_{i-1,n} \big )_{+} 
\]
and 
\[
B_n:={1\over \sqrt{n}} \sum_{i=2}^n \big | Y_{i,n}-Y_{i-1,n} \big |  
\]
when $n\to \infty $, where $x_{+}=\max\{x,0\}$ for any $x\in \mathbf{R}$, and $Y_{1,n}, \dots, Y_{n,n}$ are the concomitants (e.g., David and Nagaraja,~2003) that arise from the pairs $(X_1,Y_1), \dots, (X_n,Y_n)$.

\begin{quote}{\bf Rule of Thumb} 
\begin{description} 
\item[Case 1:] $I_n$ does \textit{not} (decisively or vaguely) approach $1/2$:
\begin{enumerate}[(i)]
\item 
If $I_n$ decisively tends to a limit other than $1/2$, then the decision maker is to be advised about the \textit{absence} of intrusion variables.
\item 
If $I_n$ seems to tend to a limit other than $1/2$ but there is some doubt (due to, e.g., data uncertainty) as to whether this is indeed true, then we check if $B_n$ is asymptotically bounded, and if yes, then the decision maker is to be advised about the \textit{absence} of intrusion variables.
\end{enumerate}
\item[Case 2:]  $I_n$ (decisively or vaguely) approaches $1/2$: 
\begin{enumerate}[(i)]    
\item 
    If $B_n$ decisively grows to infinity when $n\to \infty $, then the decision maker is to be advised about the \textit{presence} of intrusion variables.
\item
    If $B_n$ is definitely asymptotically bounded, then the decision maker is to be advised about the \textit{absence} of intrusion variables.
\item 
If there is some doubt as to whether $B_n$ grows to infinity or is asymptotically bounded, then it should be checked whether or not the transfer function $h(x)$ takes the same value at the two endpoints of the transfer window, that is, whether or nor the equation $h(a)=h(b)$ holds (a justification provided after Theorem~\ref{Thm-C} below):
      \begin{itemize}
        \item If the equation does \textit{not} hold, then the decision maker is to be advised about the \textit{presence} of intrusion variables. 
        \item If the equation holds, then the usual work of the control system should be interrupted and deterministic inputs (details in Section~\ref{determ} below) fed into the system for running the entire rule of thumb anew from the very beginning.
      \end{itemize}
\end{enumerate}
\end{description}
\end{quote}

Figure~\ref{fig-pure}
\begin{figure}[t!]
\centering
\subfigure[$I_n$ when $\alpha=2$, $\beta=3$.]{%
\includegraphics[height=0.3\textwidth]{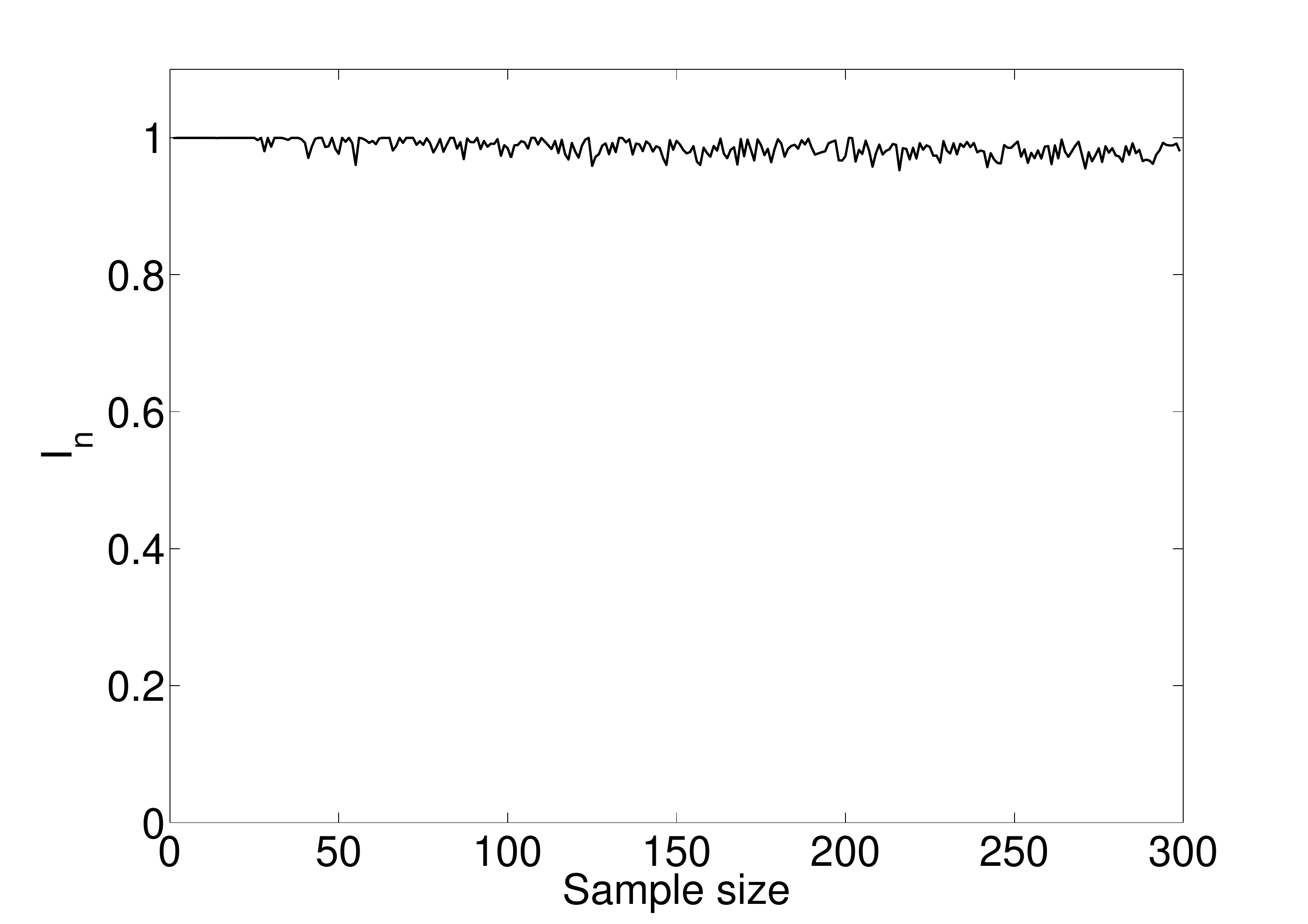}}
\hfill
\subfigure[$B_n$ when $\alpha=2$, $\beta=3$.]{%
\includegraphics[height=0.3\textwidth]{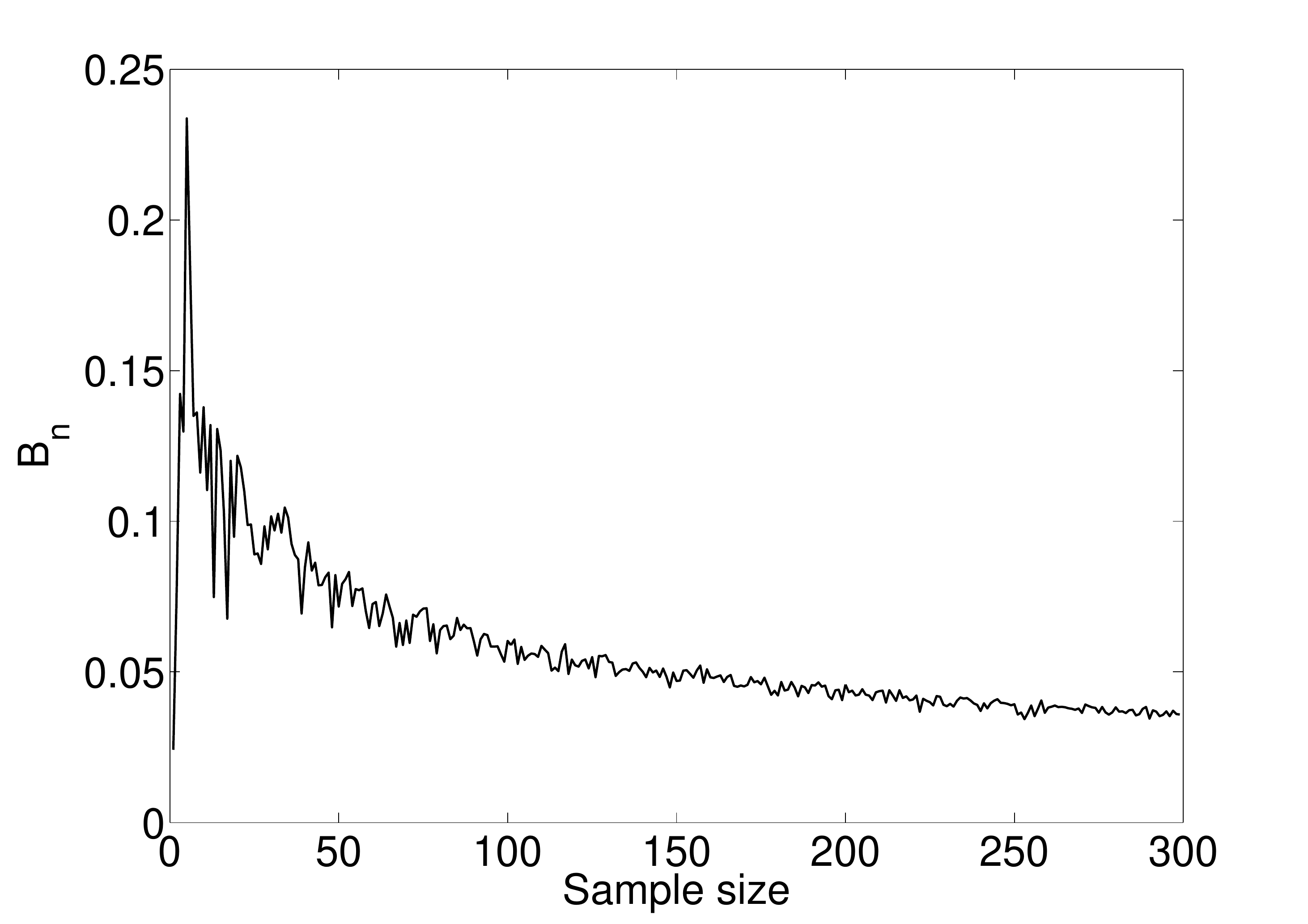}}
\\
\subfigure[$I_n$ when $\alpha=2$, $\beta=2$.]{%
\includegraphics[height=0.3\textwidth]{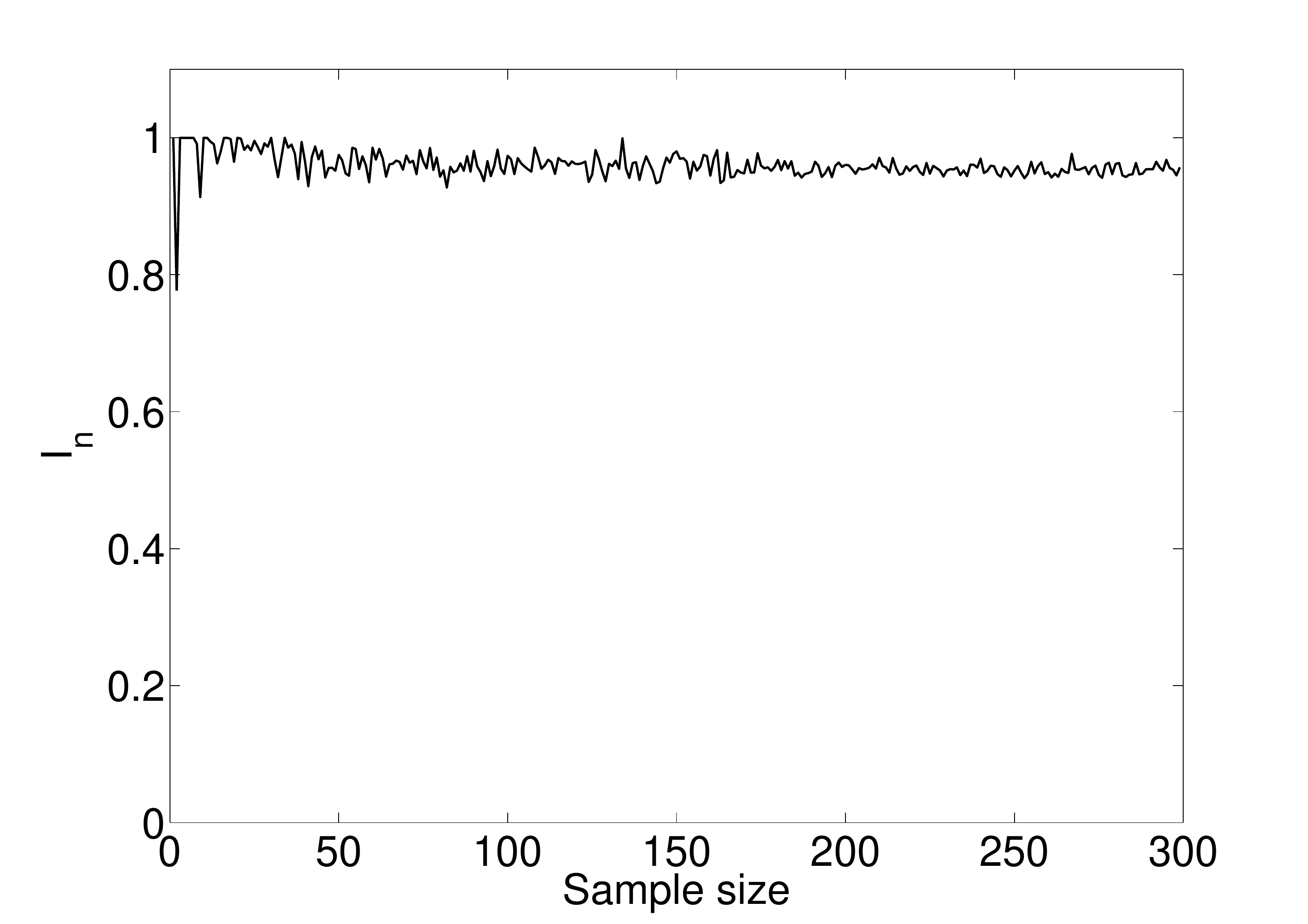}}
\hfill
\subfigure[$B_n$ when $\alpha=2$, $\beta=2$.]{%
\includegraphics[height=0.3\textwidth]{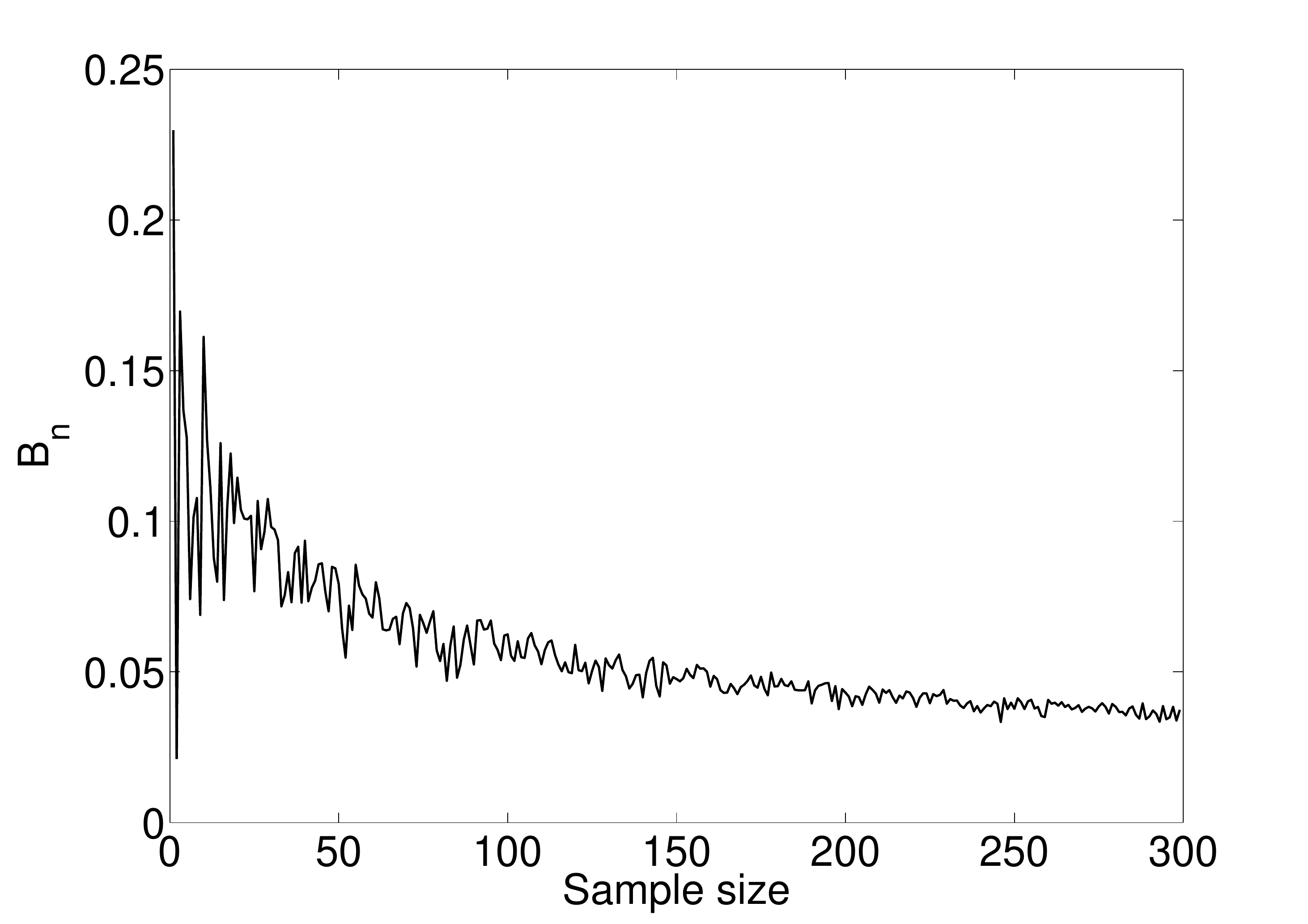}}
\\
\subfigure[$I_n$ when $\alpha=3$, $\beta=2$.]{%
\includegraphics[height=0.3\textwidth]{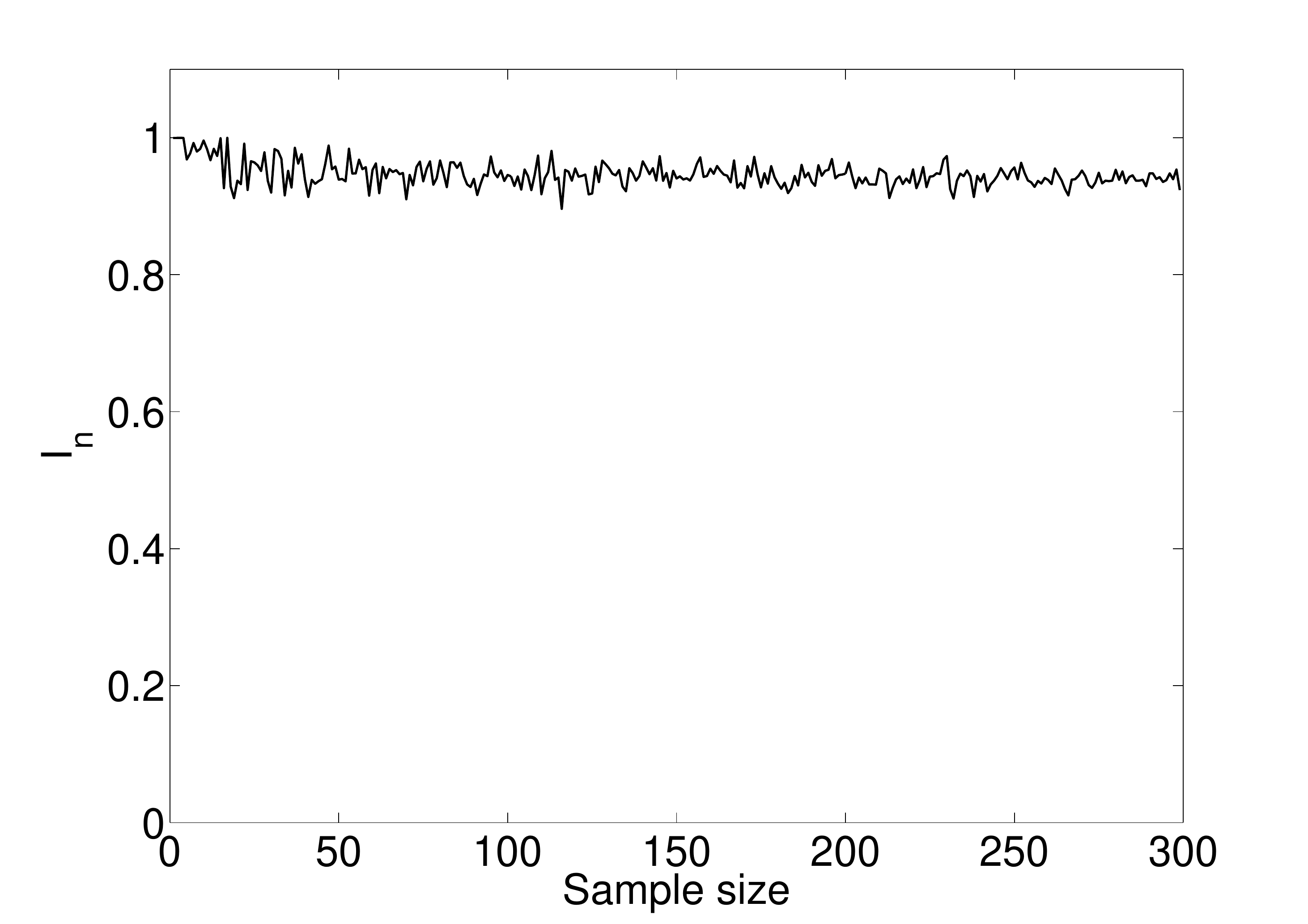}}
\hfill
\subfigure[$B_n$ when $\alpha=3$, $\beta=2$.]{%
\includegraphics[height=0.3\textwidth]{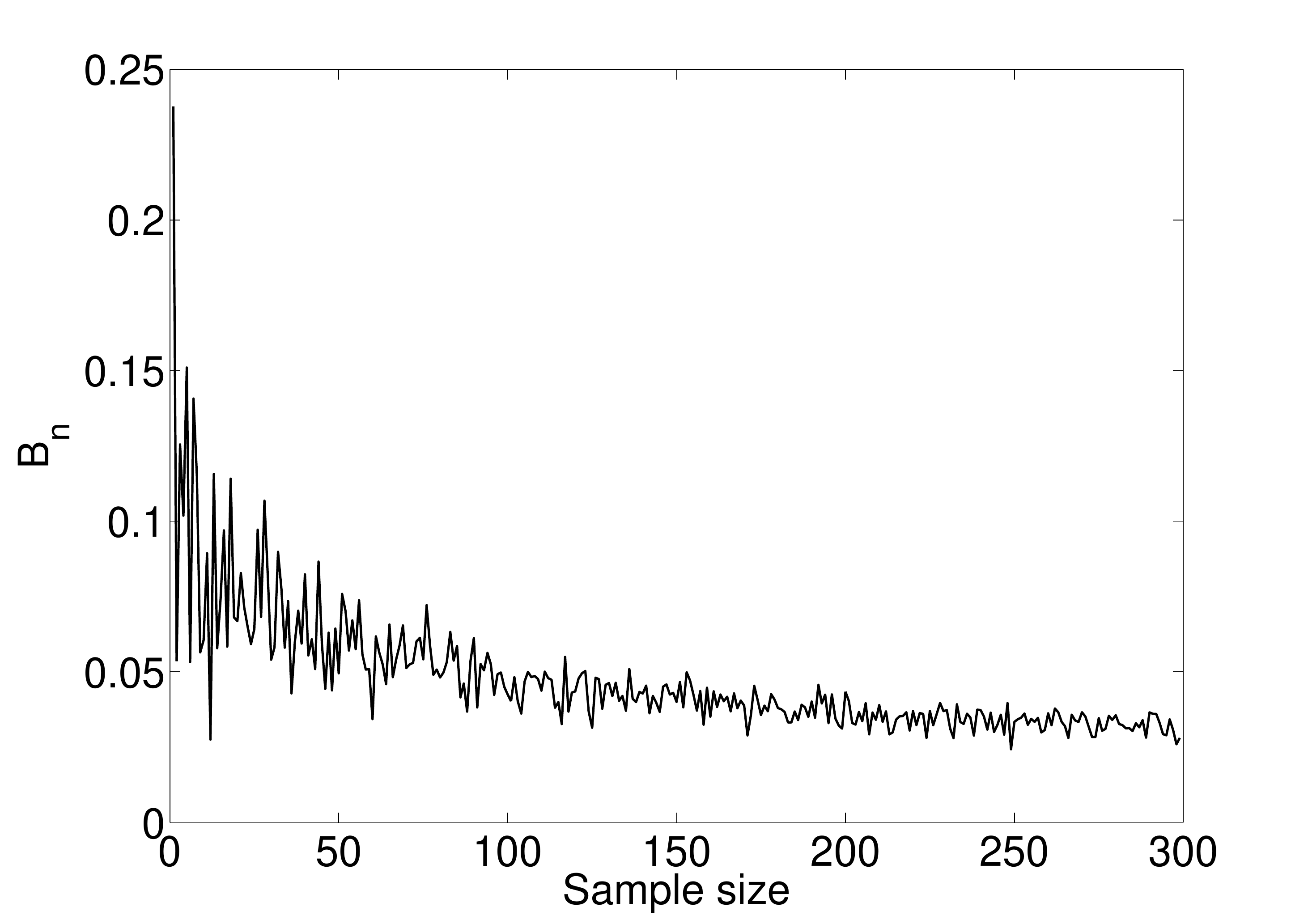}}
\caption{$I_n$ and $B_n$ with respect to the sample size $1\le n \le 300$ when $X_i\sim \mathrm{Beta}(\alpha,\beta)$,  $h(x)=1-(x-4/5)^2$ on $[0,1]$, and non-compromised outputs.}
\label{fig-pure}
\end{figure}
illustrates the rule of thumb when the outputs are not compromised by intrusion variables. Clearly, $I_n$ stays away from $1/2$, thus giving a decisive confirmation about the absence of intrusion variables (Case 1(i) of the rule of thumb). The sequence $B_n$ is declining and thus asymptotically bounded, which strengthens the conclusion even more (see the second part of Case 1(ii) of the rule of thumb).

Figure~\ref{fig-noise}
\begin{figure}[t!]
\centering
\subfigure[$I_n$ when $\alpha=2$, $\beta=3$.]{%
\includegraphics[height=0.3\textwidth]{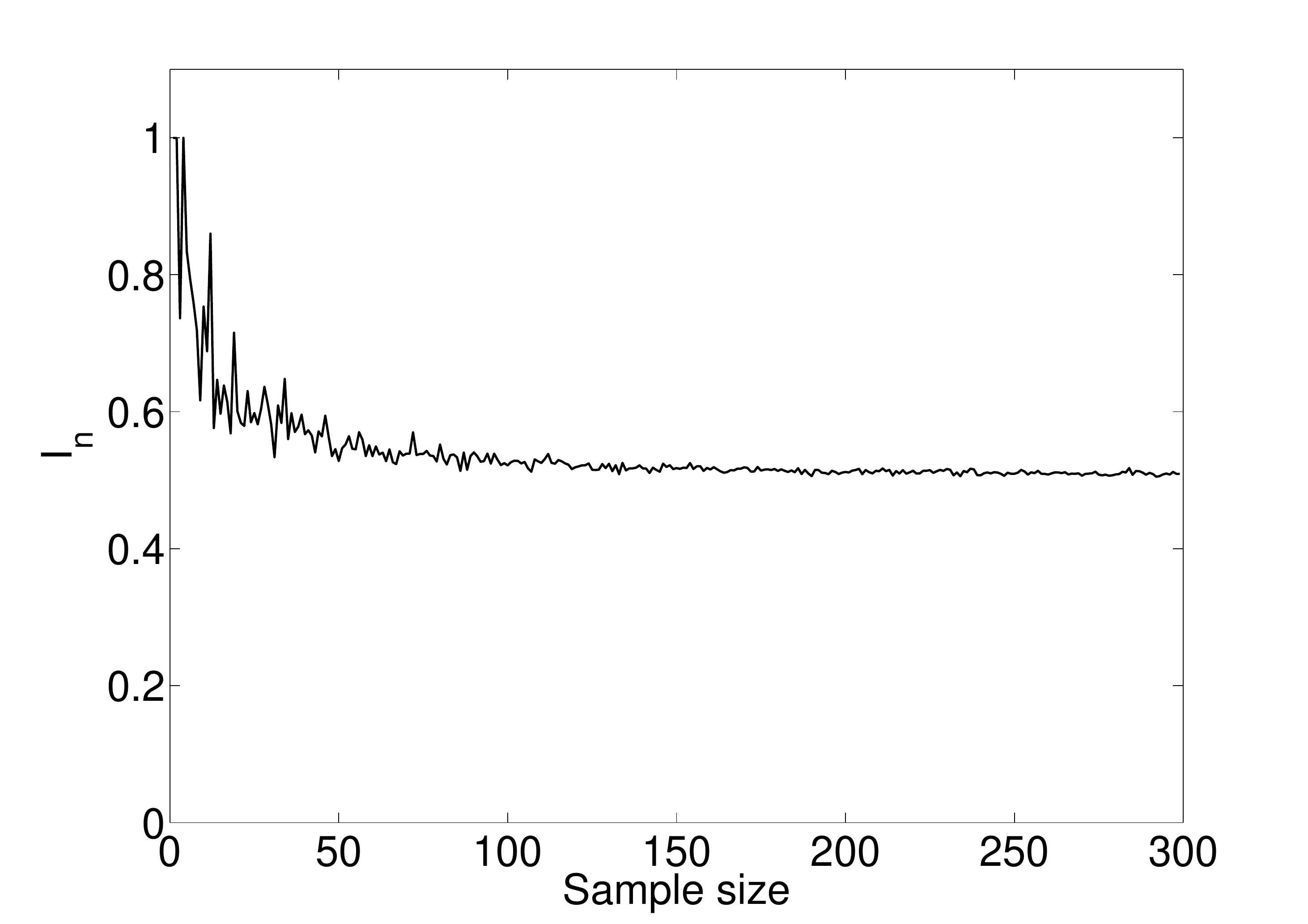}}
\hfill
\subfigure[$B_n$ when $\alpha=2$, $\beta=3$.]{%
\includegraphics[height=0.3\textwidth]{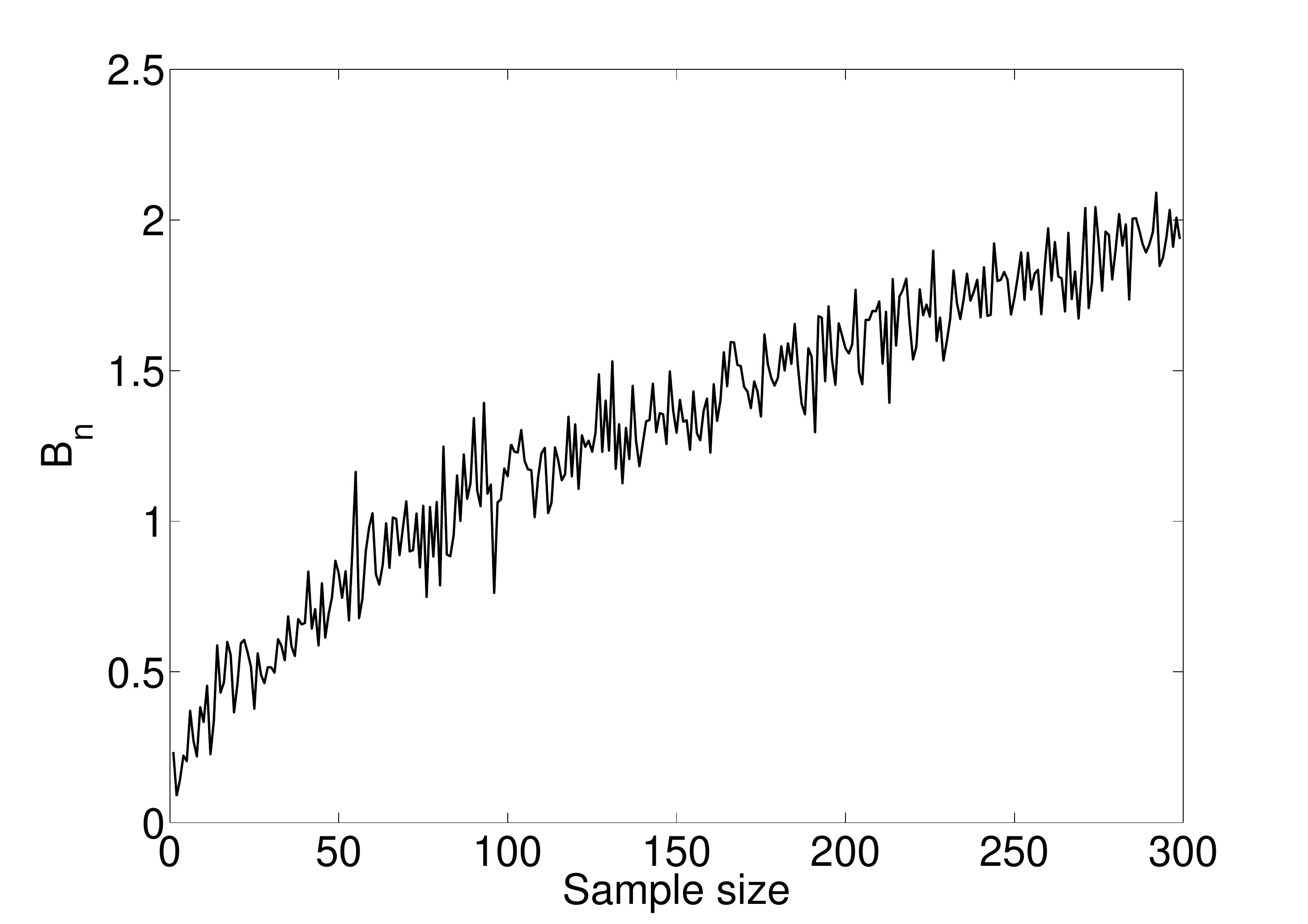}}
\\
\subfigure[$I_n$ when $\alpha=2$, $\beta=2$.]{%
\includegraphics[height=0.3\textwidth]{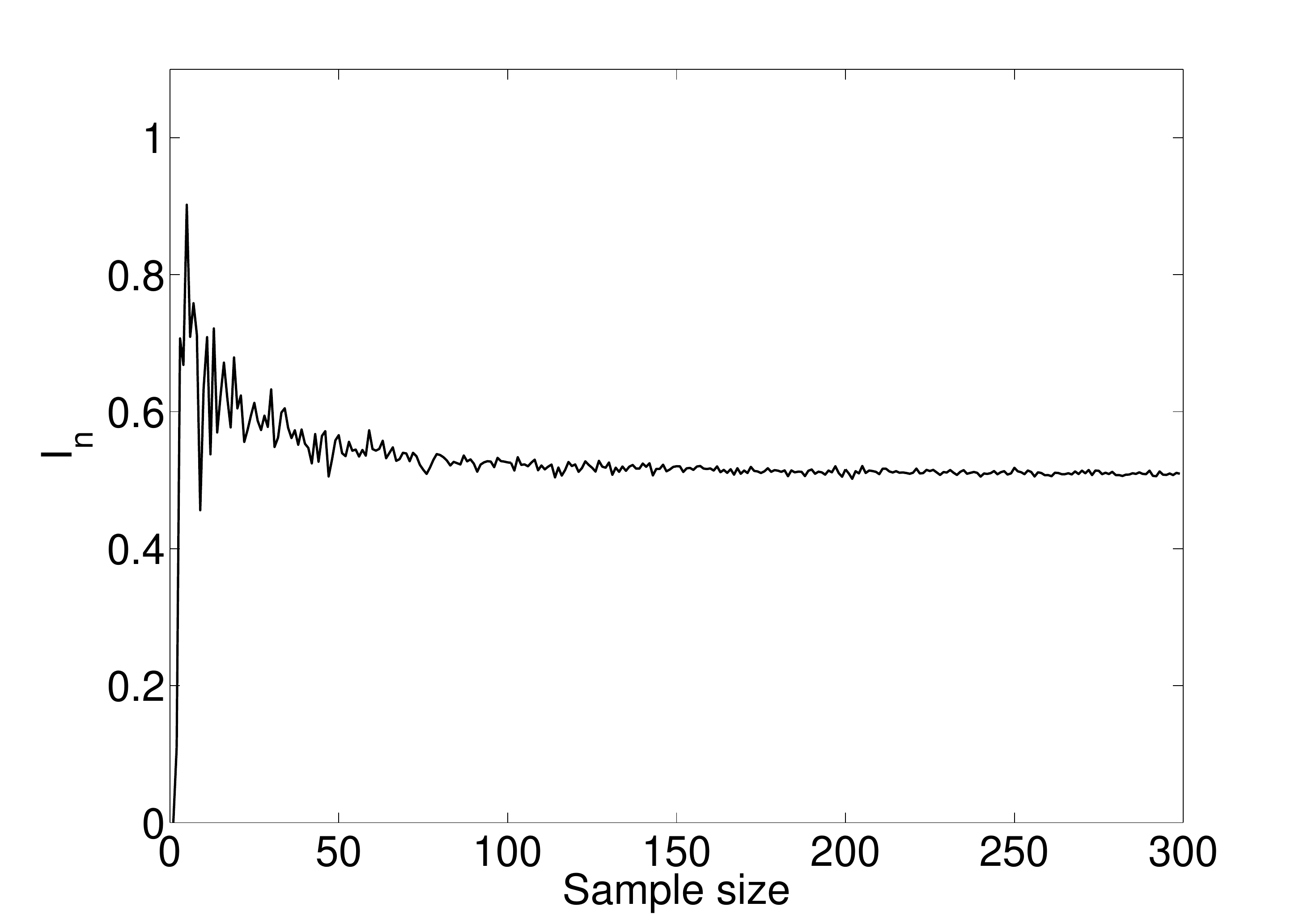}}
\hfill
\subfigure[$B_n$ when $\alpha=2$, $\beta=2$.]{%
\includegraphics[height=0.3\textwidth]{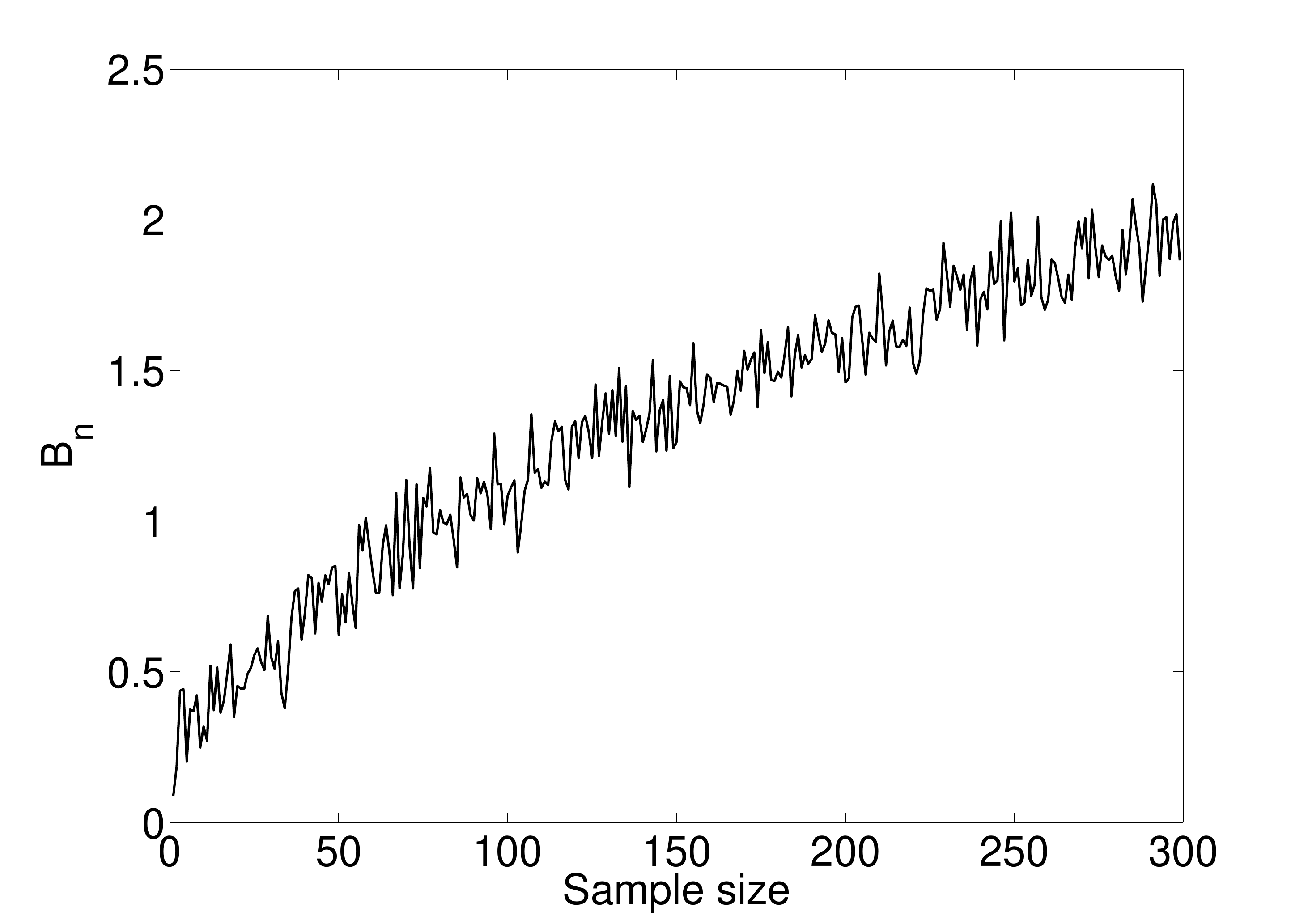}}
\\
\subfigure[$I_n$ when $\alpha=3$, $\beta=2$.]{%
\includegraphics[height=0.3\textwidth]{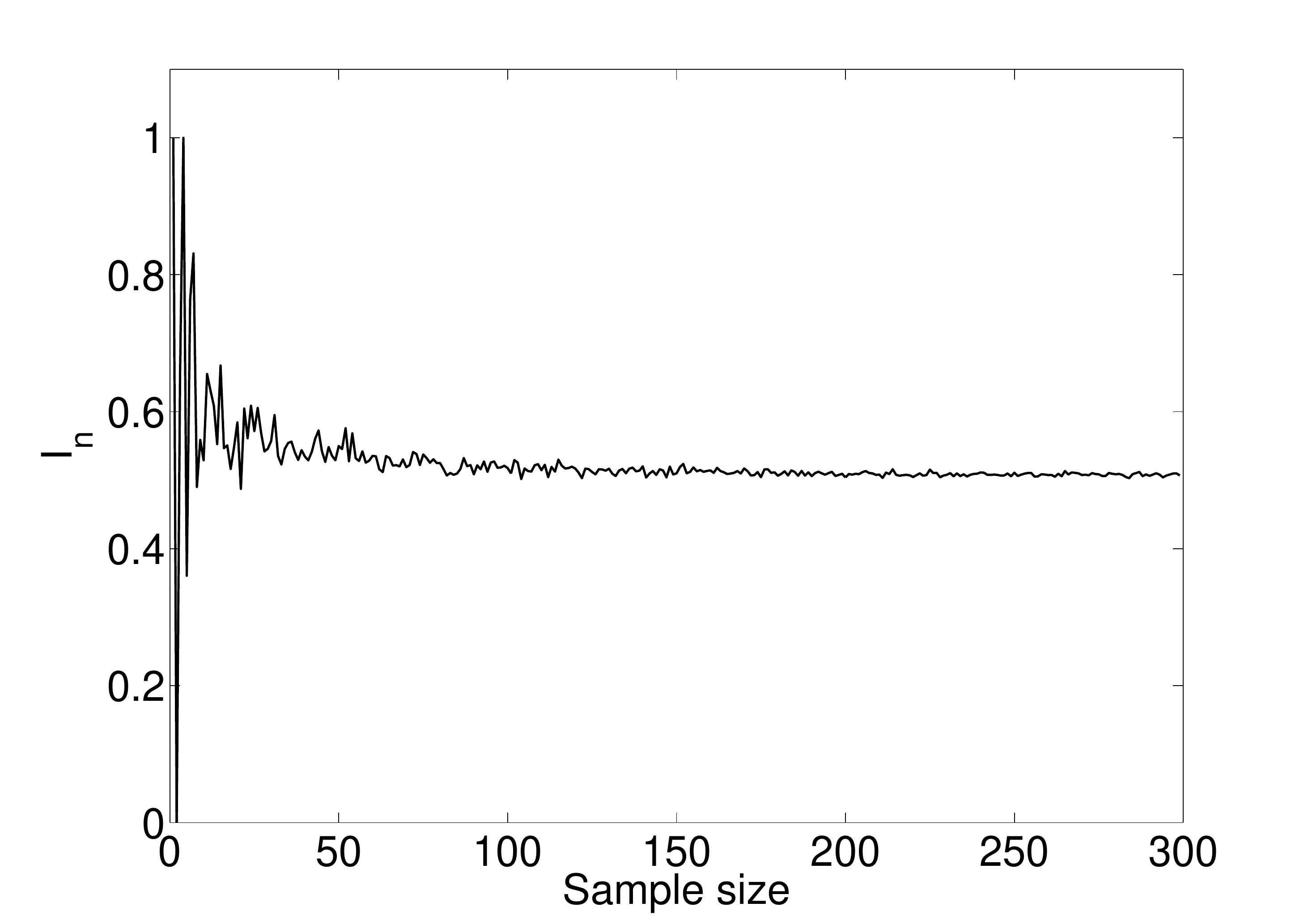}}
\hfill
\subfigure[$B_n$ when $\alpha=3$, $\beta=2$.]{%
\includegraphics[height=0.3\textwidth]{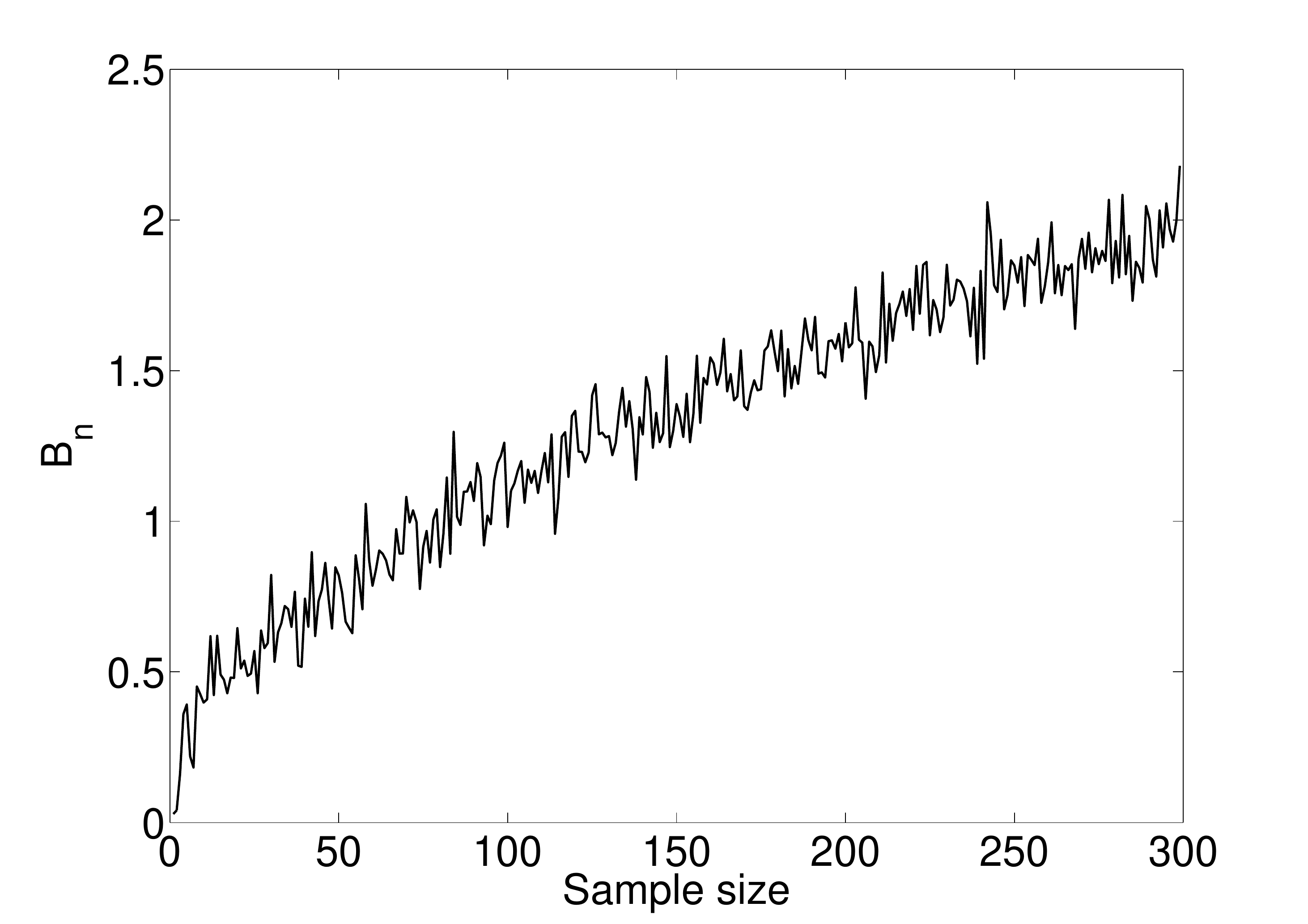}}
\caption{$I_n$ and $B_n$ with respect to the sample size $1\le n \le 300$ when $X_i\sim \mathrm{Beta}(\alpha,\beta)$,  $h(x)=1-(x-4/5)^2$ on $[0,1]$, and compromised outputs with Gaussian intrusions $\varepsilon_i\sim N(0,\sigma^2_{\varepsilon})$ with $\sigma^2_{\varepsilon}=0.01$.}
\label{fig-noise}
\end{figure}
illustrates the rule of thumb when the outputs are compromised. Clearly, the sequence $I_n$ tends to $1/2$, and it does so in a fairly rapid fashion (Case 2 of the rule). Next, we check the behaviour of the sequence $B_n$, which is definitely growing, and we thus conclude (Case 2(i) of the rule) that intrusion variables are present.

Of course, as is the case with all rules of thumb, they generally work, but there are situations when they fail, due to the simple reason that they rely on theoretical results, whose validity is usually based on certain assumptions. Hence, to rigorously sort out when the above rule of thumb works and when it does not, we next describe the results and assumptions that give rise to the rule. We start with the note that by having assumed the continuity of the cdf $F$, we have excluded (almost surely) all the ties among the inputs $X_1,\dots , X_n$, and thus given rise to the order statistics $X_{1:n}< \cdots < X_{n:n}$ that uniquely determine the corresponding concomitants $Y_{1,n}, \dots, Y_{n,n}$ (e.g., David and Nagaraja,~2003).

\begin{theorem}\label{Thm-A}
Let the non-compromised outcomes be in reasonable order, that is, let the transfer function $h(x)$ and the inputs $X_1,\dots , X_n$ satisfy Definition~\ref{order}. If $B_n$ is asymptotically bounded, then the cdf $F_{\varepsilon} $ is degenerate at $0$.
\end{theorem}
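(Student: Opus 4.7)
The plan is to argue by contradiction: assume $F_{\varepsilon}$ is non-degenerate and show that this forces $B_n\stackrel{\mathbf{P}}{\to}\infty$, contradicting the hypothesis. The key decomposition is, for each $i\ge 2$,
\[
Y_{i,n}-Y_{i-1,n} = \big(h(X_{i:n})-h(X_{i-1:n})\big) + \big(\varepsilon_{[i,n]}-\varepsilon_{[i-1,n]}\big),
\]
where $\varepsilon_{[i,n]}$ denotes the intrusion attached to the pair whose input is $X_{i:n}$. Applying the reverse triangle inequality $|a+b|\ge |b|-|a|$ termwise, summing, and dividing by $\sqrt{n}$, I obtain the pointwise bound
\[
B_n \;\ge\; E_n - B_n^0, \qquad E_n:={1\over \sqrt{n}}\sum_{i=2}^n\big|\varepsilon_{[i,n]}-\varepsilon_{[i-1,n]}\big|.
\]
Since the hypothesis is that the non-contaminated outputs are in reasonable order, $B_n^0=O_{\mathbf{P}}(1)$, so it suffices to show that $E_n\stackrel{\mathbf{P}}{\to}\infty$ whenever $F_{\varepsilon}$ is non-degenerate.

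For this I will use that the intrusions are independent of the inputs, so the random permutation that orders the $X_i$'s is independent of $(\varepsilon_1,\dots,\varepsilon_n)$. Consequently the concomitant vector $(\varepsilon_{[1,n]},\dots,\varepsilon_{[n,n]})$ has the same joint distribution as $(\varepsilon_1,\dots,\varepsilon_n)$, and therefore
\[
E_n \stackrel{d}{=} {1\over \sqrt{n}}\sum_{i=2}^n|\varepsilon_i-\varepsilon_{i-1}|.
\]
The sequence $\{|\varepsilon_i-\varepsilon_{i-1}|\}_{i\ge 2}$ is a measurable function of a two-step shift of an i.i.d.\ sequence, hence stationary and ergodic, with first moment controlled by $\mathbf{E}|\varepsilon_1-\varepsilon_2|\le \sqrt{2}\,\sigma_{\varepsilon}<\infty$. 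By Birkhoff's ergodic theorem,
\[
{1\over n}\sum_{i=2}^n|\varepsilon_i-\varepsilon_{i-1}|\;\stackrel{\text{a.s.}}{\longrightarrow}\;\mu_{\varepsilon}:=\mathbf{E}|\varepsilon_1-\varepsilon_2|.
\]
If $F_{\varepsilon}$ is non-degenerate then $\sigma_{\varepsilon}^2>0$, so $\varepsilon_1-\varepsilon_2$ is a non-degenerate mean-zero random variable and thus $\mu_{\varepsilon}>0$.

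Multiplying the ergodic limit by $\sqrt{n}$ gives $n^{-1/2}\sum_{i=2}^n|\varepsilon_i-\varepsilon_{i-1}|\to\infty$ almost surely, and by the distributional identity above, $E_n\stackrel{\mathbf{P}}{\to}\infty$. Combining with $B_n\ge E_n-B_n^0$ and $B_n^0=O_{\mathbf{P}}(1)$ yields $B_n\stackrel{\mathbf{P}}{\to}\infty$, contradicting the assumed asymptotic boundedness of $B_n$. Therefore $F_{\varepsilon}$ must be degenerate at $0$.

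The main obstacle I anticipate is the justification that $(\varepsilon_{[i,n]})_{i=1}^n$ has the same joint law as the original i.i.d.\ sequence $(\varepsilon_i)_{i=1}^n$. The cleanest way to handle this rigorously is to condition on the permutation $\pi$ of $\{1,\dots,n\}$ induced by the ranks of $X_1,\dots,X_n$: by independence of the $X$'s and $\varepsilon$'s, $\pi$ is independent of $(\varepsilon_1,\dots,\varepsilon_n)$, and for any fixed permutation the joint law of $(\varepsilon_{\pi(1)},\dots,\varepsilon_{\pi(n)})$ coincides with that of $(\varepsilon_1,\dots,\varepsilon_n)$ by exchangeability. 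Everything else in the argument — the reverse triangle inequality, the ergodic theorem for a one-dependent stationary functional of an i.i.d.\ sequence, and the propagation of $\mathbf{P}$-convergence through the inequality $B_n\ge E_n-B_n^0$ — is routine.
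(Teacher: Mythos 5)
Your proposal is correct and follows essentially the same route as the paper's proof: the same termwise reverse-triangle-inequality decomposition $B_n \ge E_n - B_n^0$, the same key fact that the concomitant intrusions $\varepsilon_{[i]}$ are distributed as the original iid sequence, and the same conclusion that non-degeneracy forces $B_n$ of order $\sqrt{n}$. The only (immaterial) difference is that you establish the divergence of the $\varepsilon$-sum via Birkhoff's ergodic theorem for the one-dependent stationary sequence $|\varepsilon_i-\varepsilon_{i-1}|$, whereas the paper centers the sum and uses the finite second moments.
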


\begin{proof}
We first provide additional information about the concomitants $Y_{1,n}, \dots, Y_{n,n}$ under the above specified model: the inputs $X_1,\dots , X_n \sim F$ are iid, the intrusion variables $\varepsilon_1,\dots , \varepsilon_n\sim F_{\varepsilon}$ are also iid, and the two sets of random variables are independent, that is, the inputs and the intrusion variables are mutually independent. Under this model, the concomitants admit the representation
\[
Y_{i,n}=h(X_{i:n})+\varepsilon_{[i]}, \quad i=1,\dots , n,
\]
where $\varepsilon_{[i]}$ denotes the particular intrusion variable associated with $X_{i:n}$. Equipped with these facts, we write the bounds
\begin{align*}
B_n&\ge {1\over \sqrt{n}} \sum_{i=2}^n | \varepsilon_{[i]}-\varepsilon_{[i-1]} |
-{1\over \sqrt{n}} \sum_{i=2}^n \big | h(X_{i:n})-h(X_{i-1:n}) \big |
\\
&\ge  {1\over \sqrt{n}} \sum_{i=2}^n \Big (| \varepsilon_{[i]}-\varepsilon_{[i-1]} |
- \mathbf{E}\big [| \varepsilon_{[i]}-\varepsilon_{[i-1]} |\big ]\Big )
+ {n-1\over \sqrt{n}} \mathbf{E}\big [| \varepsilon_{[i]}-\varepsilon_{[i-1]} |\big ]
+ O_{\mathbf{P}}(1)
\end{align*}
when $n\to \infty $. It is known (e.g., David and Nagaraja,~2003, p.~145) that  $\varepsilon_{[1]},\dots , \varepsilon_{[n]}$ are iid and follow the same  cdf $ F_{\varepsilon}$ as the original intrusion variables $\varepsilon_1,\dots , \varepsilon_n$.
Since we have assumed that the latter ones have finite second moments, we therefore conclude that
\begin{align*}
B_n
&\ge \sqrt{n}~ \mathbf{E}\big [|\varepsilon_{[2]}-\varepsilon_{[1]}|\big ]+ O_{\mathbf{P}}(1)
\\
&= \sqrt{n}~ \mathbf{E}\big [|\varepsilon_{2}-\varepsilon_{1}|\big ]+ O_{\mathbf{P}}(1)
\end{align*}
when $n\to \infty $. Since we have assumed that the sequence $B_n$ is asymptotically bounded when $n\to \infty $, the expectation $\mathbf{E} [| \varepsilon_{2}-\varepsilon_{1} |  ]$ must be equal to $0$. The latter can be true only if the cdf $F_{\varepsilon}$ is degenerate at some point, which of course must be equal to $0$ due to the assumed zero means of the intrusion variables. This finishes the proof of Theorem~\ref{Thm-A}.
\end{proof}

Hence, Theorem~\ref{Thm-A} says that if the sequence $B_n$ is asymptotically bounded  when $n\to \infty $, then there are no intrusion variables, as illustrated in  Figure~\ref{fig-pure}. The following theorem explains what happens when $B_n$ is not asymptotically bounded, which is the case explored in Figure~\ref{fig-noise}.

\begin{theorem}[Gribkova and Zitikis, 2018]
\label{Thm-B}
Let the non-compromised outputs $h(X_i)$ have finite second moments, and let the intrusion variables $\varepsilon_{1},\dots , \varepsilon_{n}$, irrespective of whether they are degenerate or not, also have finite second moments. If $B_n$ grows to infinity when $n\to \infty $, then $I_n$ converges to $ 1/2$.
\end{theorem}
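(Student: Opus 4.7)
My plan rests on a single algebraic identity: for any real $x$, one has $x_{+}=(|x|+x)/2$. Applying this termwise inside the numerator of $I_n$ gives
\begin{equation*}
\sum_{i=2}^n (Y_{i,n}-Y_{i-1,n})_{+} \;=\; \tfrac{1}{2}\sum_{i=2}^n |Y_{i,n}-Y_{i-1,n}| \;+\; \tfrac{1}{2}\sum_{i=2}^n (Y_{i,n}-Y_{i-1,n}),
\end{equation*}
where the last sum telescopes to $Y_{n,n}-Y_{1,n}$. Dividing by $B_n\sqrt{n}$ and recognising the first piece as $\tfrac{1}{2}B_n\sqrt{n}$, I obtain the exact representation
\begin{equation*}
I_n \;=\; \frac{1}{2} \;+\; \frac{Y_{n,n}-Y_{1,n}}{2\, B_n \sqrt{n}}.
\end{equation*}
Thus the conclusion $I_n\stackrel{\mathbf{P}}{\to} 1/2$ reduces to showing that the ratio $(Y_{n,n}-Y_{1,n})/(B_n\sqrt{n})$ converges to $0$ in probability.

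Because $B_n\to\infty$ by assumption, it suffices to control $Y_{n,n}-Y_{1,n}$ at the rate $\sqrt{n}$; concretely, I will show $Y_{n,n}-Y_{1,n}=o_{\mathbf{P}}(\sqrt{n})$. Using the concomitant representation $Y_{i,n}=h(X_{i:n})+\varepsilon_{[i]}$ recalled in the proof of Theorem~\ref{Thm-A}, I bound
\begin{equation*}
|Y_{n,n}-Y_{1,n}| \;\le\; 2\max_{1\le i\le n}|h(X_i)| \;+\; 2\max_{1\le i\le n}|\varepsilon_i|.
\end{equation*}
Here the key step invokes the classical fact that if $Z_1,\dots,Z_n$ are iid with $\mathbf{E}[Z^2]<\infty$, then $\max_{i\le n}|Z_i|/\sqrt{n}\stackrel{\mathbf{P}}{\to}0$; this follows from a union bound together with the tail estimate $n\mathbf{P}(|Z|>\varepsilon\sqrt{n})\le \varepsilon^{-2}\mathbf{E}[Z^2\,\mathbf{1}\{|Z|>\varepsilon\sqrt{n}\}]$ and dominated convergence. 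The hypotheses of the theorem (finite second moments of $h(X_1)$ and of $\varepsilon_1$) are precisely what are needed to apply this fact to both terms.

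Combining, I get $(Y_{n,n}-Y_{1,n})/\sqrt{n}\stackrel{\mathbf{P}}{\to}0$, and dividing by $B_n$ (which tends to infinity in probability) only strengthens the convergence, so
\begin{equation*}
\frac{Y_{n,n}-Y_{1,n}}{2\,B_n\sqrt{n}}\;\stackrel{\mathbf{P}}{\to}\;0,
\end{equation*}
which together with the exact identity above gives $I_n\stackrel{\mathbf{P}}{\to}1/2$. The only potentially delicate step is the maximal inequality for iid variables with finite second moment; everything else is essentially bookkeeping around the $x_{+}=(|x|+x)/2$ trick and the telescoping sum.
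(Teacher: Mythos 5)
Your proof is correct. Note that the paper states Theorem~\ref{Thm-B} without proof, importing it from Gribkova and Zitikis (2018), so there is no in-paper argument to compare against; your write-up supplies a complete, self-contained justification. The two ingredients --- the exact identity $I_n=\tfrac12+(Y_{n,n}-Y_{1,n})/(2B_n\sqrt{n})$ obtained from $x_{+}=(|x|+x)/2$ and telescoping, and the fact that $\max_{i\le n}|Z_i|=o_{\mathbf{P}}(\sqrt{n})$ for iid $Z_i$ with finite second moments --- are exactly what the hypotheses are designed for, and the final step is handled correctly since $B_n\stackrel{\mathbf{P}}{\to}\infty$ makes $1/B_n=o_{\mathbf{P}}(1)$. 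The only cosmetic addition worth making is that $I_n$ is well defined with probability tending to one, because $B_n>0$ eventually in probability; this is the natural (and essentially canonical) route to the result.
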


Hence, Theorem~\ref{Thm-B} says that if $B_n$ grows to infinity, then the control system is compromised, provided that the limit $1/2$ indicates the presence of intrusion variables, which indeed happens most of the time, but not always, and this is the very point where the condition $h(a)\ne h(b)$ arises. The following theorem and discussion clarify the matter.

\begin{theorem}[Gribkova and Zitikis, 2018]
\label{Thm-C}
Let the derivative $(\mathrm{d}/\mathrm{d}u)h\circ F^{-1}(u)$ exist, be continuous, and not identically equal to $0$ on the interval $[0,1]$. If the outputs are not compromised, that is, if the cdf $F_{\varepsilon} $ is degenerate (at the point $0$), then $I_n$ converges to
\[
I(h):={\int_{a}^{b} (h'(u))_{+}\mathrm{d}u \over \int_{a}^{b}|h(u)|\mathrm{d}u}.
\]
\end{theorem}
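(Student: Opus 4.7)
The plan is to exploit the no-intrusion hypothesis to identify the concomitants as $Y_{i,n}=h(X_{i:n})$, and then to recognize both the numerator $\sum_{i=2}^n(h(X_{i:n})-h(X_{i-1:n}))_+$ and the rescaled denominator $B_n\sqrt{n}=\sum_{i=2}^n|h(X_{i:n})-h(X_{i-1:n})|$ of the ratio defining $I_n$ as random Riemann sums that converge in probability to the corresponding integrals. Since the factor $\sqrt{n}$ cancels between $I_n$ and $B_n$, the quotient stabilises to a deterministic limit.

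The first step is the standard passage to uniform order statistics: set $U_i=F(X_i)$ and $g=h\circ F^{-1}$, so that $h(X_{i:n})=g(U_{i:n})$. By the mean value theorem,
\[
g(U_{i:n})-g(U_{i-1:n})=g'(\nu_i)\,(U_{i:n}-U_{i-1:n})
\]
for some $\nu_i\in[U_{i-1:n},U_{i:n}]$, and a short sign check (using that the spacings are nonnegative) shows the identities $(g(U_{i:n})-g(U_{i-1:n}))_+=(g'(\nu_i))_+(U_{i:n}-U_{i-1:n})$ and the analogue with absolute values. Thus the numerator and denominator of $I_n$ become the exact Riemann sums $\sum_{i=2}^n (g'(\nu_i))_+(U_{i:n}-U_{i-1:n})$ and $\sum_{i=2}^n |g'(\nu_i)|(U_{i:n}-U_{i-1:n})$ over the random partition $0\le U_{1:n}\le\cdots\le U_{n:n}\le 1$.

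The second step is to show that these Riemann sums converge in probability to $\int_0^1 (g'(u))_+\,\mathrm{d}u$ and $\int_0^1 |g'(u)|\,\mathrm{d}u$, respectively. Here I would combine three facts: the uniform continuity of $g'$ on the compact interval $[0,1]$; the classical bound $\max_{2\le i\le n}(U_{i:n}-U_{i-1:n})=O_{\mathbf{P}}(n^{-1}\log n)$ on the maximal uniform spacing; and $U_{1:n}\stackrel{\mathbf{P}}{\to}0$, $U_{n:n}\stackrel{\mathbf{P}}{\to}1$. The first two imply that replacing $g'(\nu_i)$ by any value of $g'$ on $[U_{i-1:n},U_{i:n}]$ introduces an error uniform in $i$ that tends to zero, while the third ensures the omitted end pieces $[0,U_{1:n}]$ and $[U_{n:n},1]$ contribute negligibly because $g'$ is bounded. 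The change of variable $u=F(x)$ with $\mathrm{d}u=f(x)\mathrm{d}x$ and $g'(u)=h'(x)/f(x)$ then transports the limiting integrals from $[0,1]$ to $(a,b)$, producing the stated formula for $I(h)$.

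The third step is the ratio: since $g'\not\equiv 0$, the denominator's limit $\int_0^1|g'(u)|\,\mathrm{d}u$ is strictly positive, so the continuous mapping theorem (or Slutsky's theorem) yields $I_n\stackrel{\mathbf{P}}{\to} I(h)$. The main technical hurdle I expect is in step two, namely controlling $|g'(\nu_i)-g'(u)|$ uniformly in $i$ for $u\in[U_{i-1:n},U_{i:n}]$; this is where the assumed smoothness of $g'$ must be married to the maximal-spacing estimate to ensure that the modulus of continuity evaluated at a shrinking (but random) mesh tends to zero in probability.
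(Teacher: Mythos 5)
The paper does not prove Theorem~\ref{Thm-C} itself --- it imports the result from Gribkova and Zitikis (2018) --- so there is no in-paper proof to compare against; judged on its own, your argument is correct and is exactly the natural extension of the paper's own mean-value-theorem computation in display (\ref{condition-0b}): identify $Y_{i,n}=h(X_{i:n})=g(U_{i:n})$ with $g=h\circ F^{-1}$, turn numerator and denominator into tagged Riemann sums over the uniform order statistics, and use uniform continuity of $g'$ together with the maximal-spacing bound and the negligible end pieces to pass to $\int_0^1(g'(u))_+\,\mathrm{d}u$ and $\int_0^1|g'(u)|\,\mathrm{d}u$, then apply Slutsky since $g'\not\equiv 0$ forces the latter integral to be positive. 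Two small remarks. First, your change of variables $u=F(x)$, $g'(u)=h'(x)/f(x)$ tacitly assumes $F$ has a positive density; under the theorem's stated hypotheses it is cleaner to note that $\int_0^1(g'(u))_+\,\mathrm{d}u$ is the positive variation of $g$ on $[0,1]$, which is invariant under the increasing reparametrization $F^{-1}$ and hence equals the positive variation $\int_a^b(h'(x))_+\,\mathrm{d}x$ of $h$. Second, the denominator in the paper's displayed formula for $I(h)$ reads $\int_a^b|h(u)|\,\mathrm{d}u$ but must be $\int_a^b|h'(u)|\,\mathrm{d}u$ (as the subsequent discussion of when $I(h)=1/2$ confirms); your proof yields the corrected version.
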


Hence, when there are no intrusion variables, $I_n$ converges to $I(h)$, which can be equal to $1/2$. This can happen if and only if $\int_{a}^{b} (h'(u))_{+}\mathrm{d}u=\int_{a}^{b} (h'(u))_{-}\mathrm{d}u$,
where we have used the notation $x_{-}=\max\{-x,0\}$. Since $x=x_{+}-x_{-}$, the above equation is equivalent to $\int_{a}^{b} h'(u)\mathrm{d}u=0$, which means 
\begin{equation}\label{equality}
h(a)=h(b). 
\end{equation}
Consequently, by eliminating the possibility of having equation (\ref{equality}), we establish an one-to-one relationship between the convergence of $I_n$ to $1/2$ and the presence of intrusion variables. We note in this regard that, as far as we are aware of, practically relevant transfer functions are outside the class of those satisfying the relationship $h(a)=h(b)$, unless the control system is down and the transfer function $h(x)$ takes the same value irrespective of $x\in [a,b]$.

To illustrate the case $h(a)=h(b)$ visually, in Figure~\ref{fig-contam-confuse}
\begin{figure}[t!]
\centering
\subfigure[$I_n$ when $\alpha=2$, $\beta=3$.]{%
\includegraphics[height=0.3\textwidth]{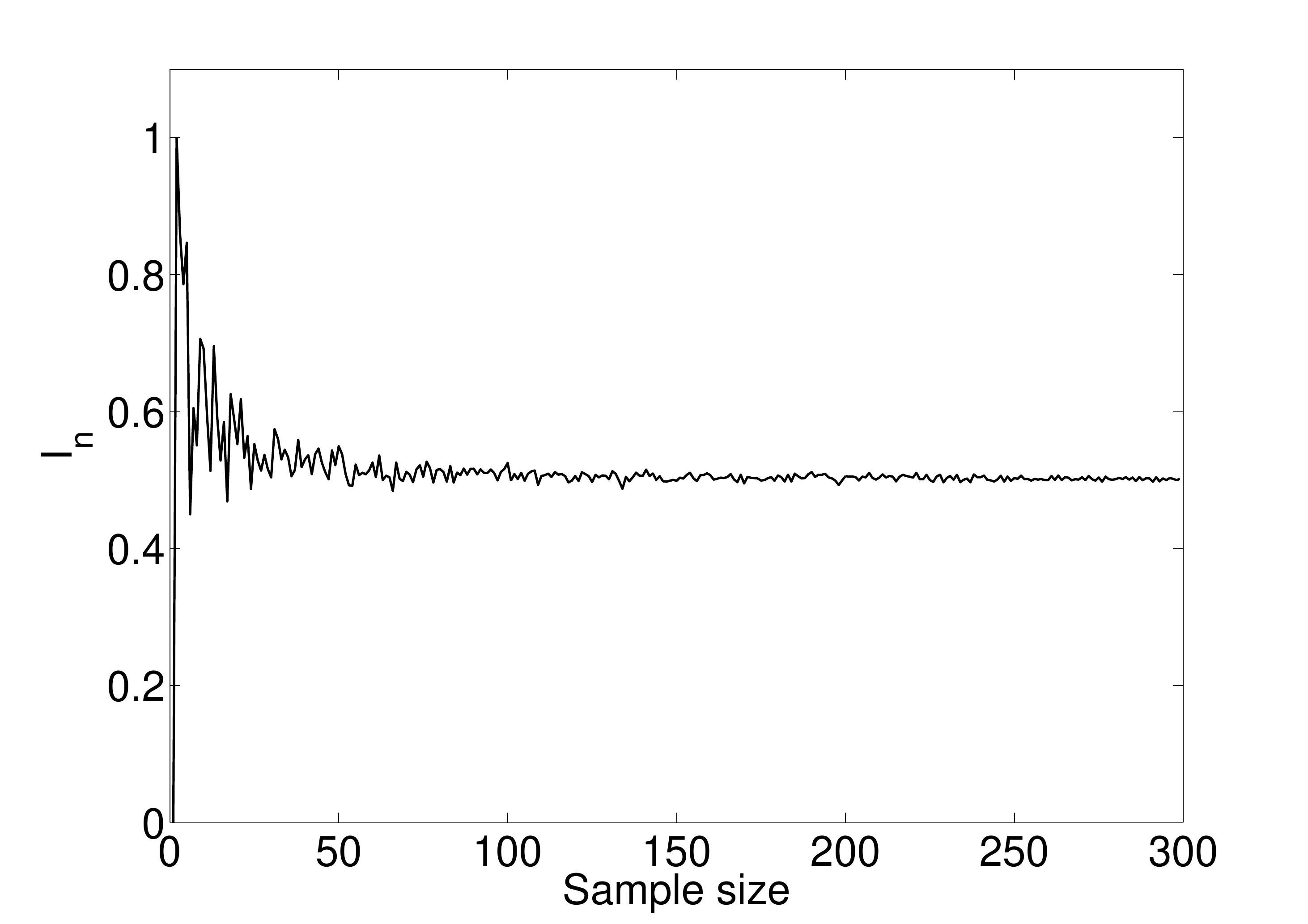}}
\hfill
\subfigure[$B_n$ when $\alpha=2$, $\beta=3$.]{%
\includegraphics[height=0.3\textwidth]{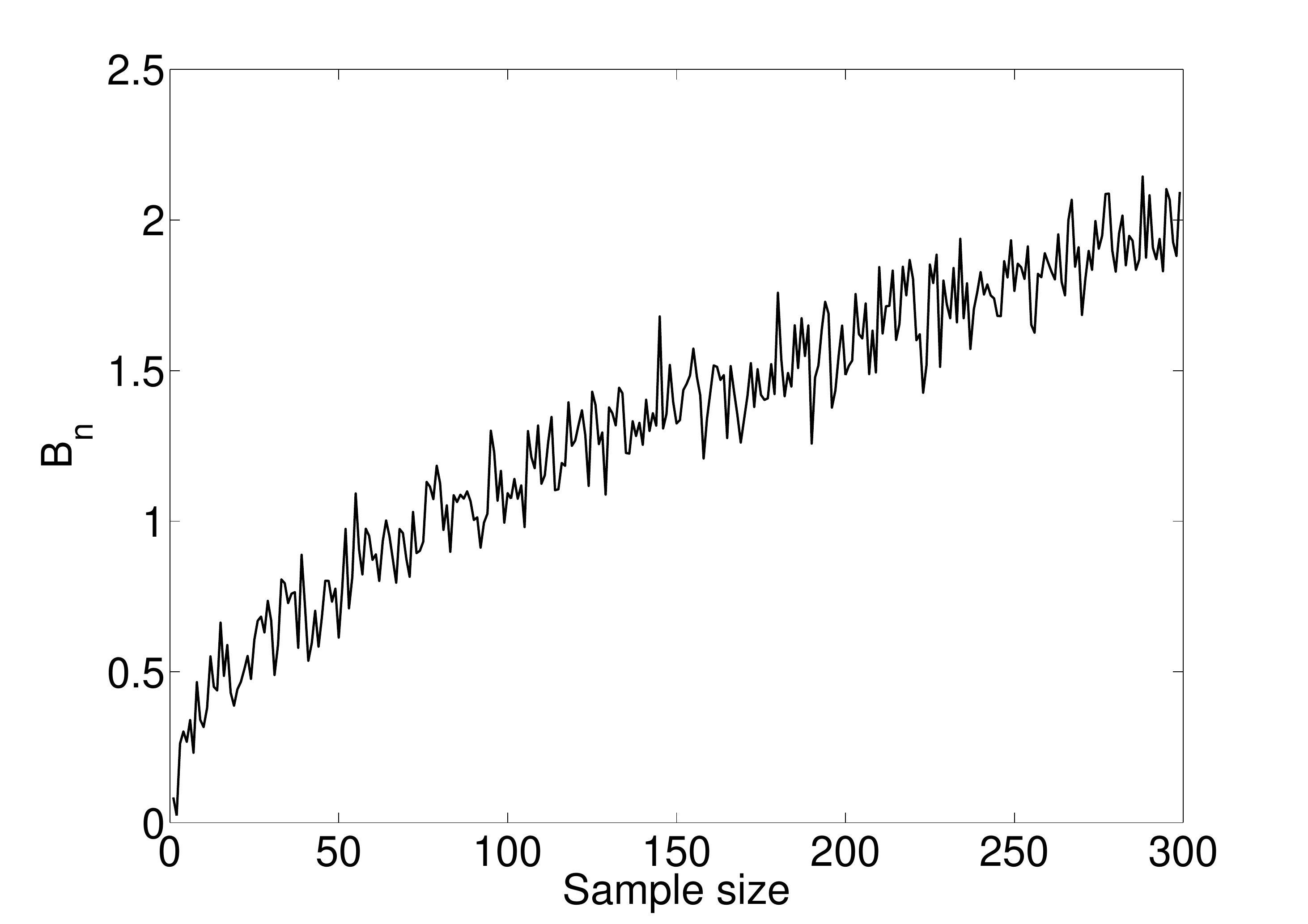}}
\\
\subfigure[$I_n$ when $\alpha=2$, $\beta=2$.]{%
\includegraphics[height=0.3\textwidth]{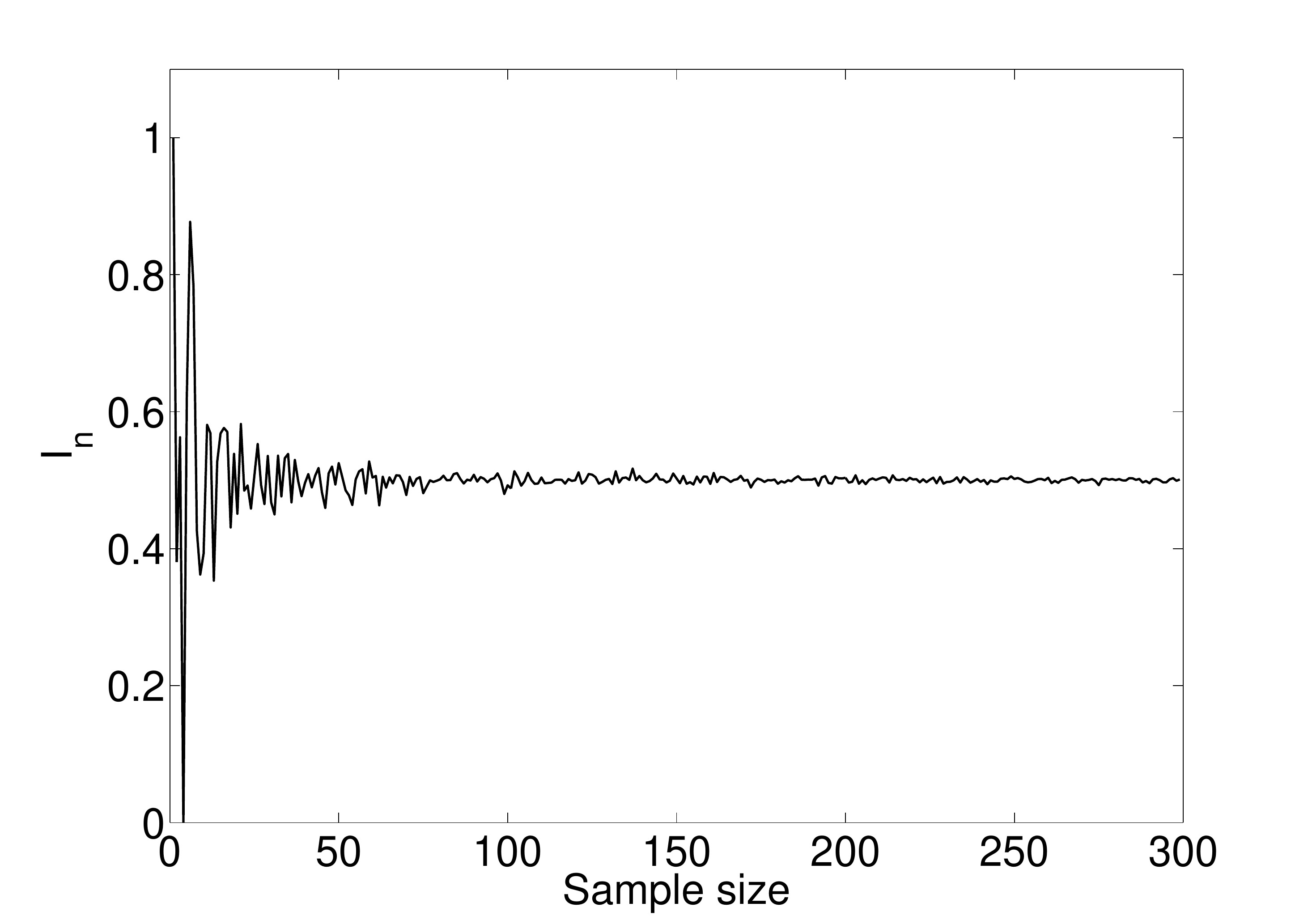}}
\hfill
\subfigure[$B_n$ when $\alpha=2$, $\beta=2$.]{%
\includegraphics[height=0.3\textwidth]{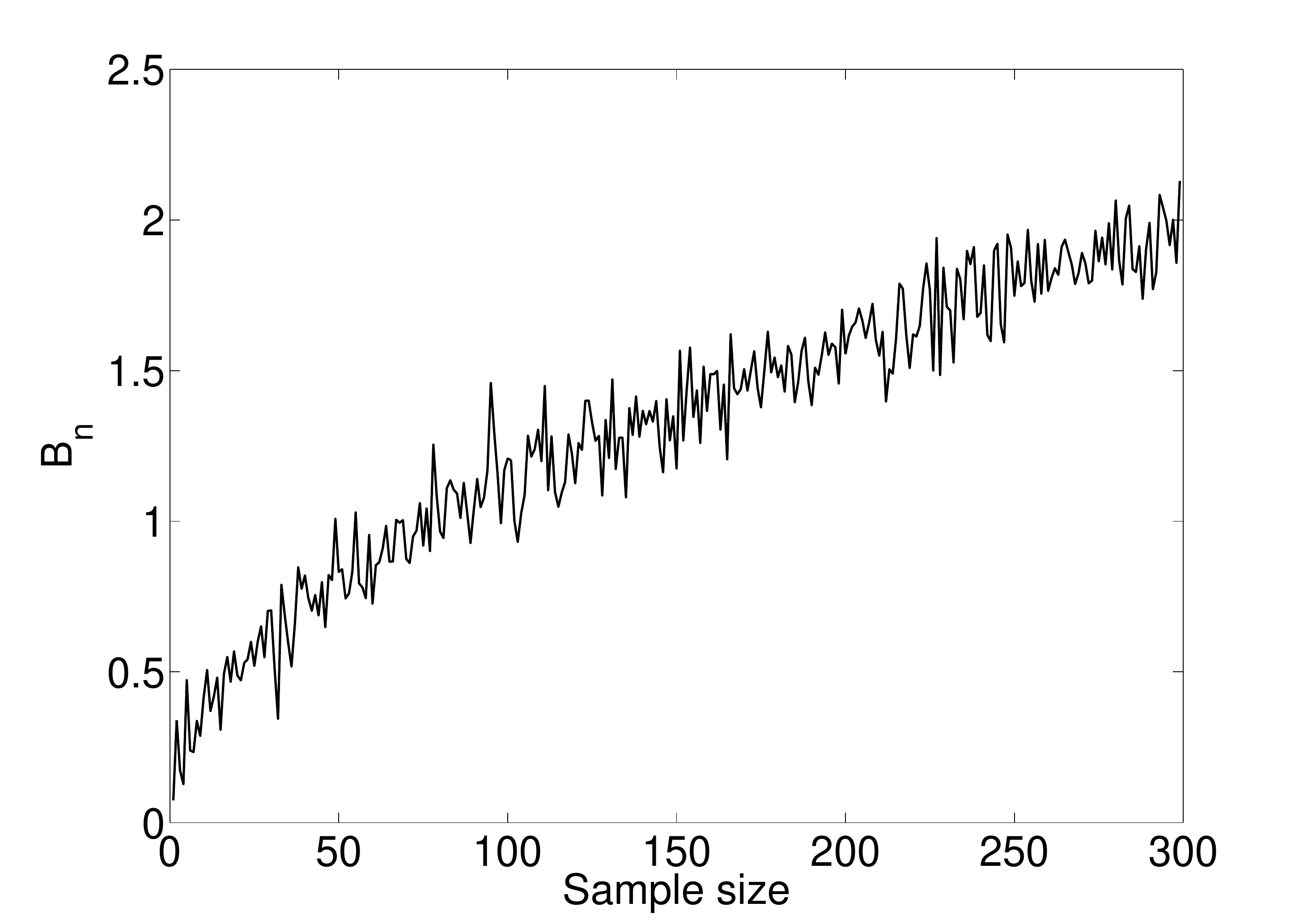}}
\\
\subfigure[$I_n$ when $\alpha=3$, $\beta=2$.]{%
\includegraphics[height=0.3\textwidth]{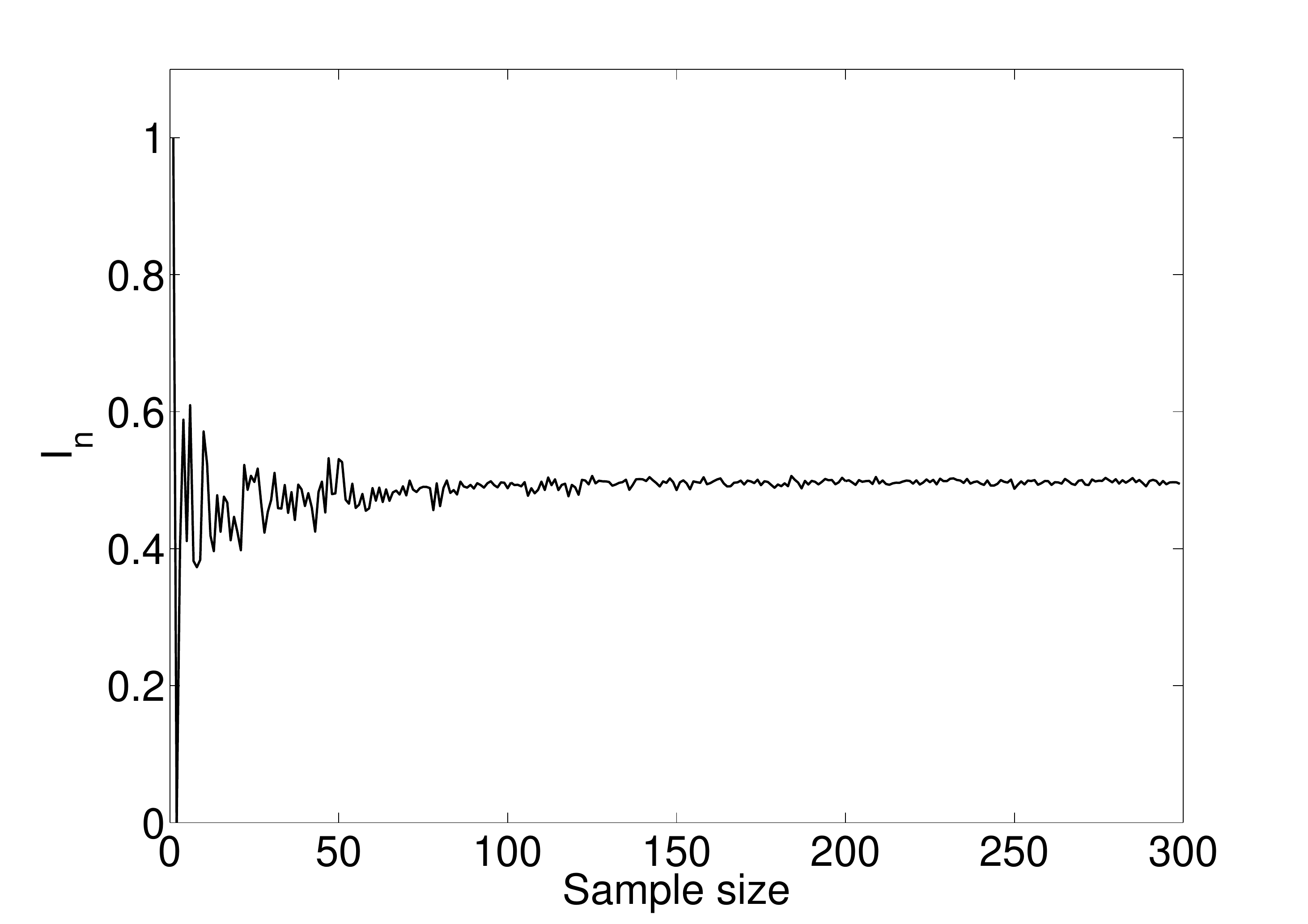}}
\hfill
\subfigure[$B_n$ when $\alpha=3$, $\beta=2$.]{%
\includegraphics[height=0.3\textwidth]{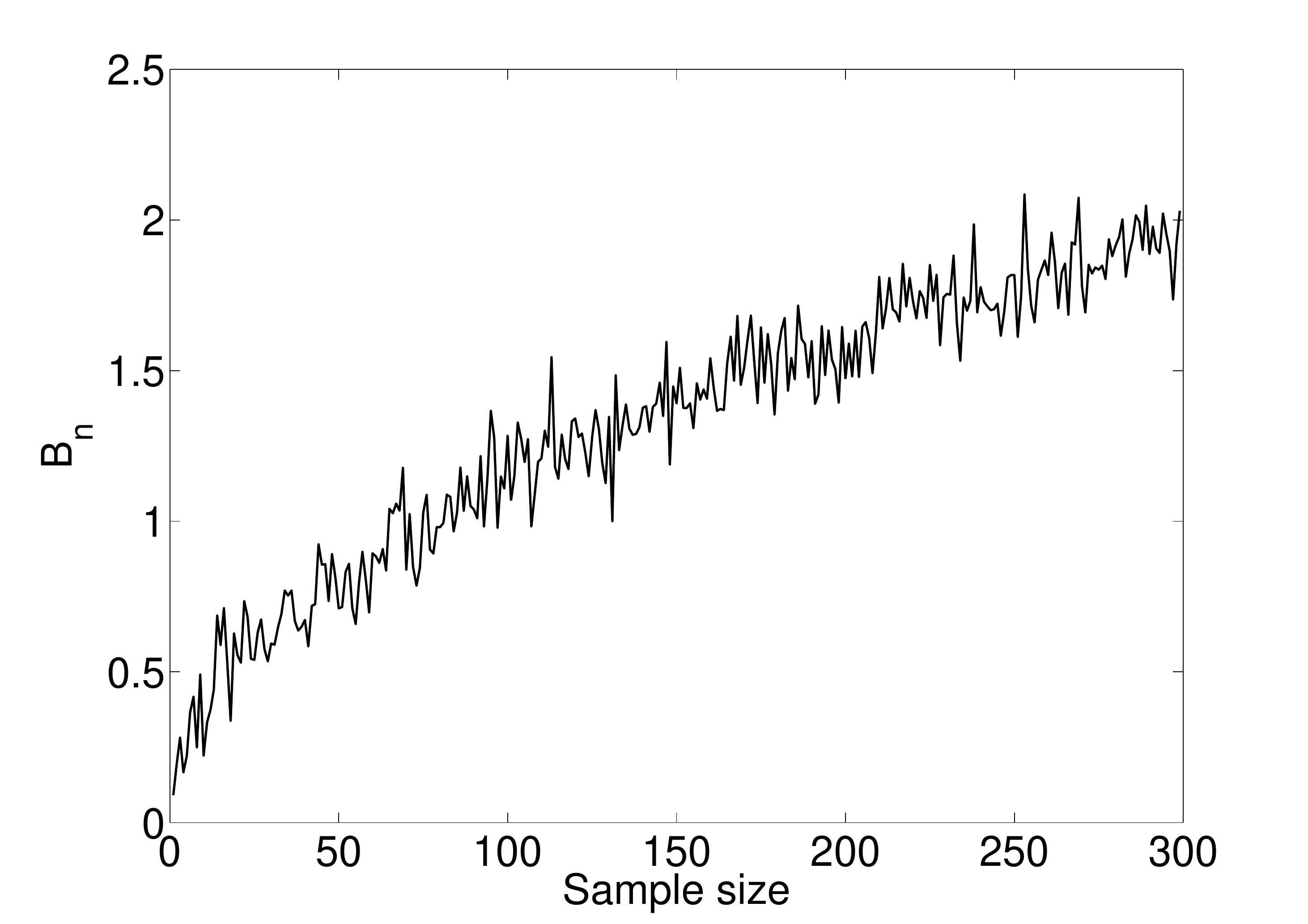}}
\caption{$I_n$ and $B_n$ with respect to the sample size $1\le n \le 300$ when $X_i\sim (8/5)\mathrm{Beta}(\alpha,\beta)$,  $h(x)=1-(x-4/5)^2$ on $[0,8/5]$, and compromised outputs with Gaussian intrusions $\varepsilon_i\sim N(0,\sigma^2_{\varepsilon})$ with $\sigma^2_{\varepsilon}=0.01$.}
\label{fig-contam-confuse}
\end{figure}
we have depicted the asymptotic behaviour of $I_n$ and $B_n$ with added intrusion variables $\varepsilon_1,\dots , \varepsilon_n$. Clearly, $I_n$ tends to $1/2$, and $B_n$ grows. Based on the rule of thumb (see Case 2(i)), we conclude that the systems is compromised by intrusion variables.

For comparison, in Figure~\ref{fig-pure-confuse}
\begin{figure}[t!]
\centering
\subfigure[$I_n$ when $\alpha=2$, $\beta=3$.]{%
\includegraphics[height=0.3\textwidth]{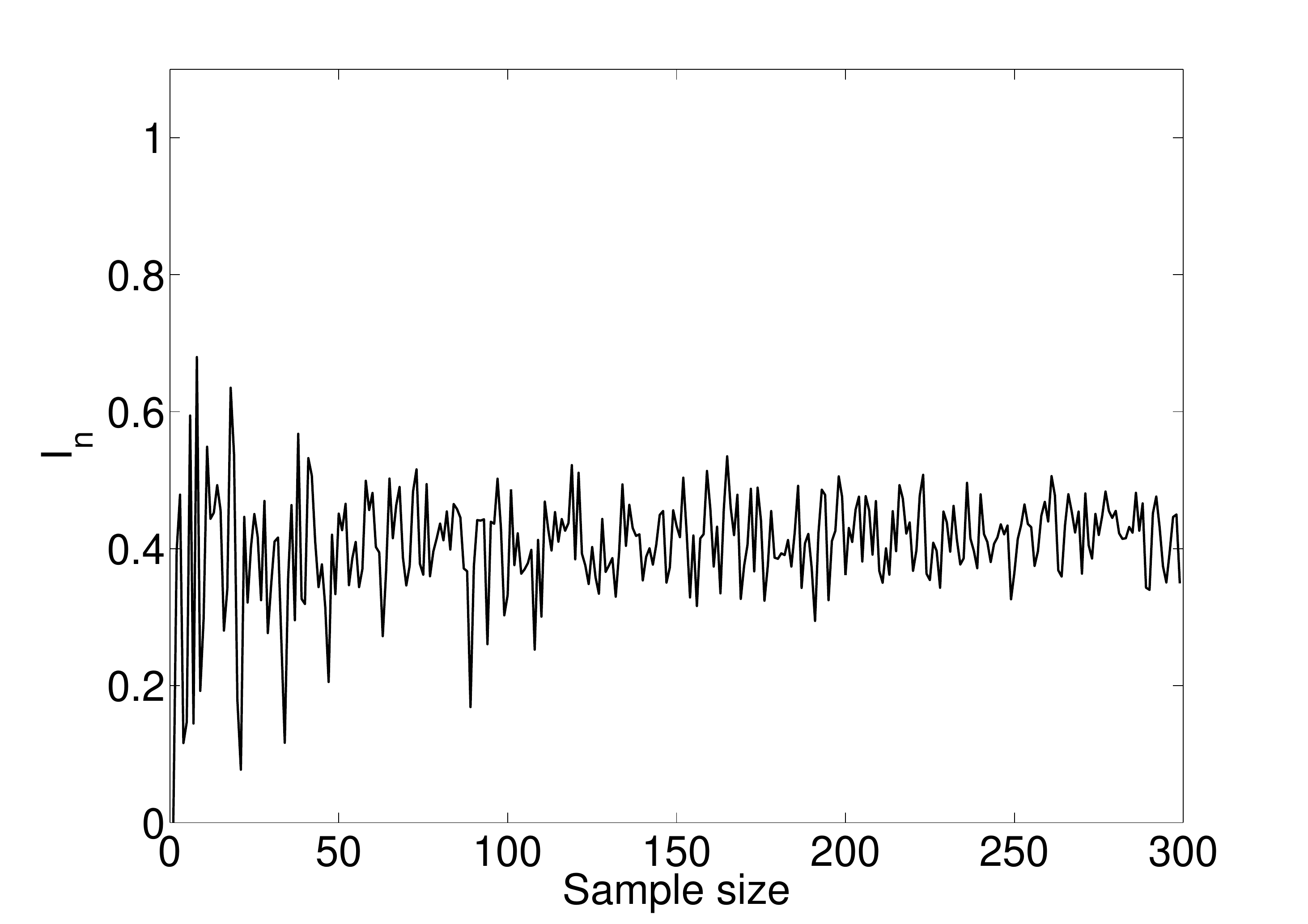}}
\hfill
\subfigure[$B_n$ when $\alpha=2$, $\beta=3$.]{%
\includegraphics[height=0.3\textwidth]{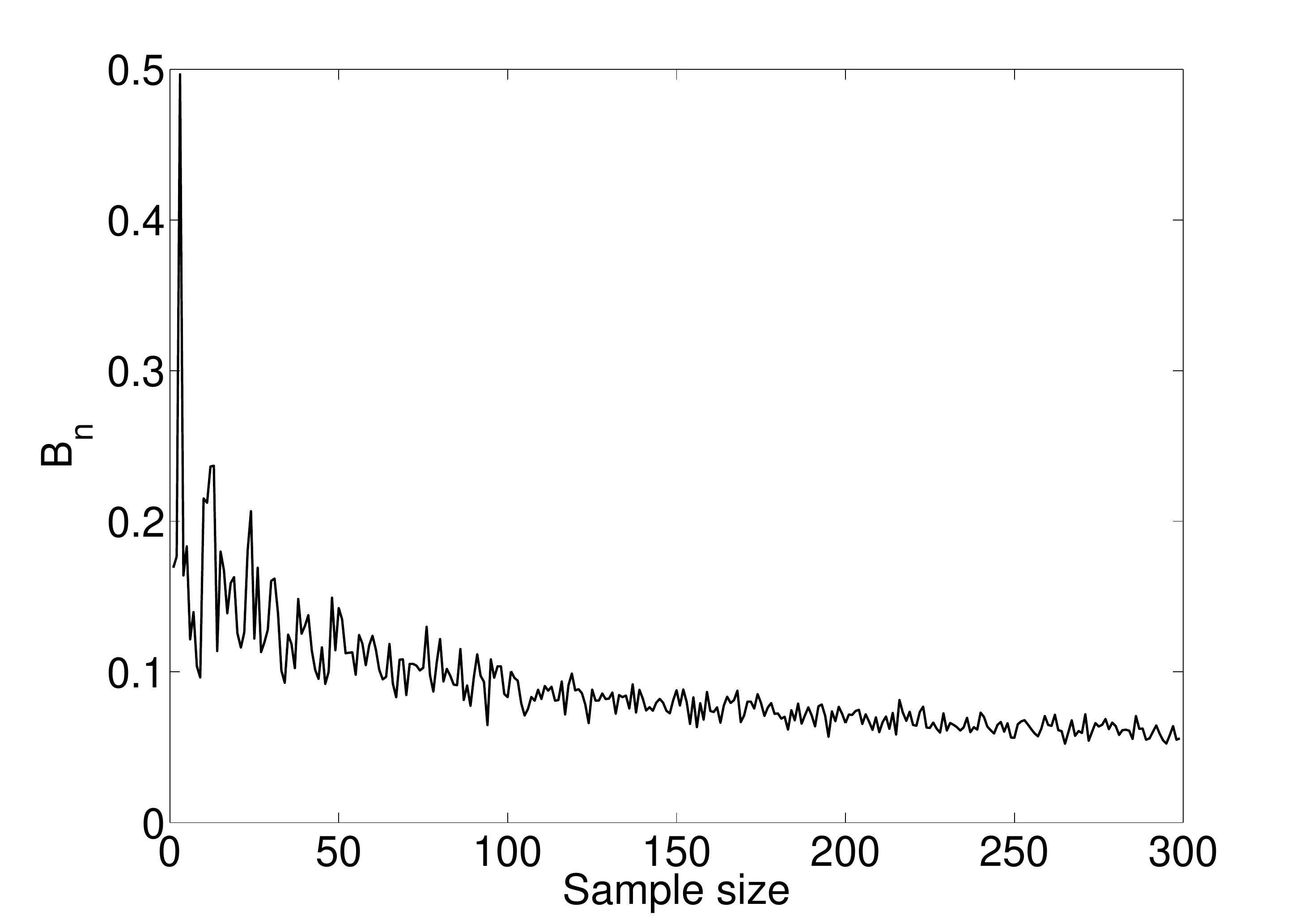}}
\\
\subfigure[$I_n$ when $\alpha=2$, $\beta=2$.]{%
\includegraphics[height=0.3\textwidth]{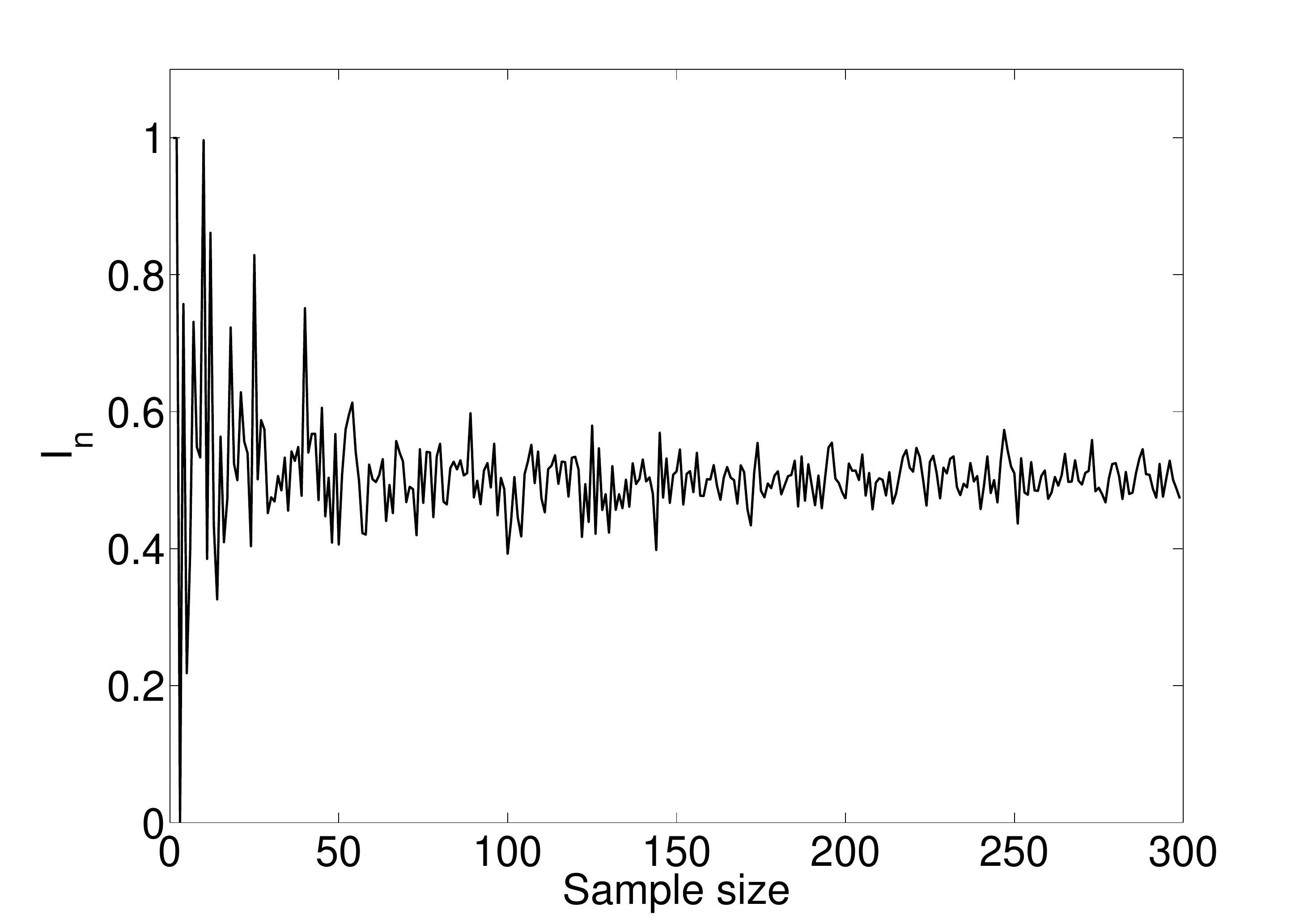}}
\hfill
\subfigure[$B_n$ when $\alpha=2$, $\beta=2$.]{%
\includegraphics[height=0.3\textwidth]{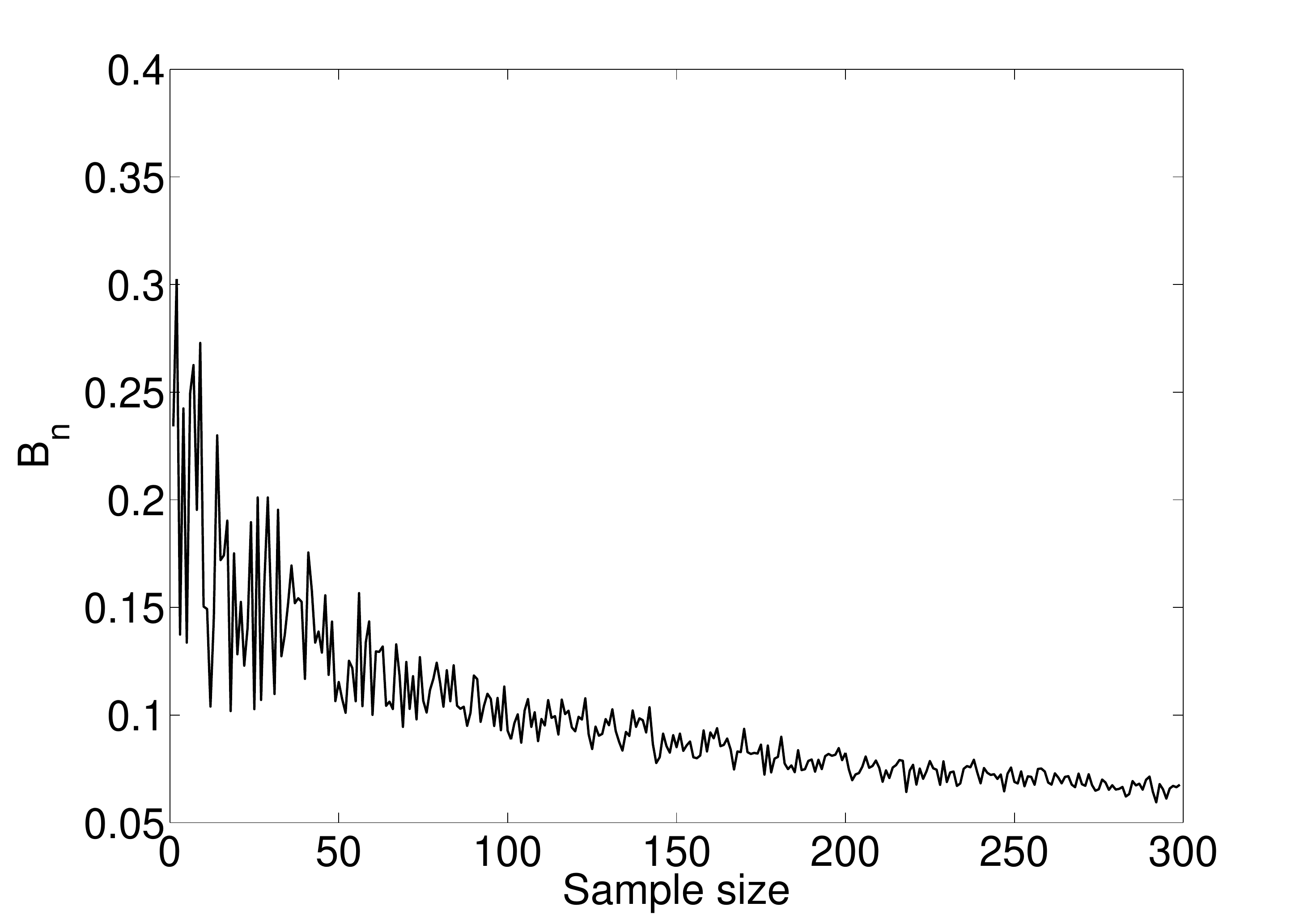}}
\\
\subfigure[$I_n$ when $\alpha=3$, $\beta=2$.]{%
\includegraphics[height=0.3\textwidth]{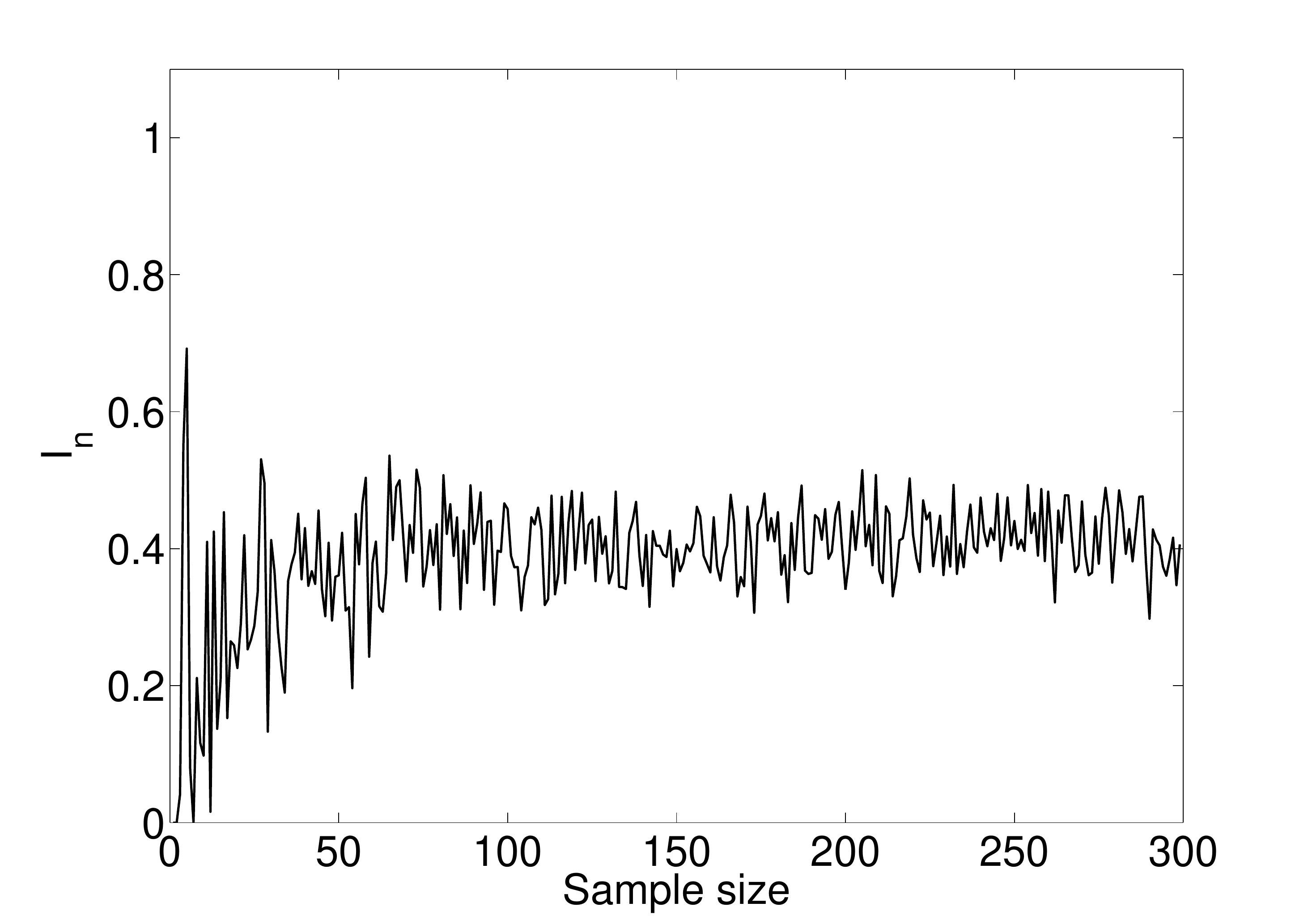}}
\hfill
\subfigure[$B_n$ when $\alpha=3$, $\beta=2$.]{%
\includegraphics[height=0.3\textwidth]{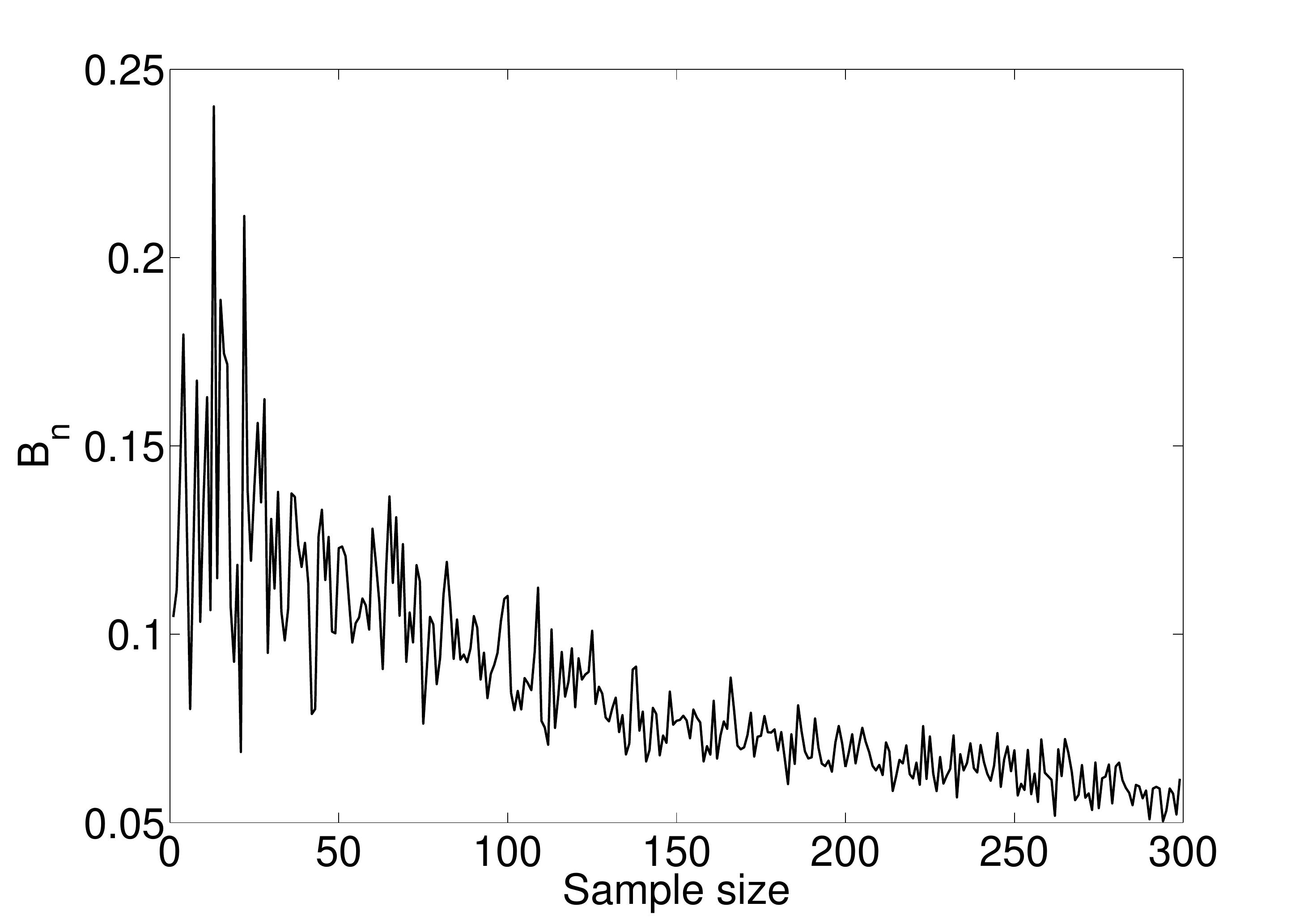}}
\caption{$I_n$ and $B_n$ with respect to the sample size $1\le n \le 300$ when $X_i\sim (8/5)\mathrm{Beta}(\alpha,\beta)$,  $h(x)=1-(x-4/5)^2$ on $[0,8/5]$, and non-compromised outputs.}
\label{fig-pure-confuse}
\end{figure}
we depict asymptotic behaviour of $I_n$ and $B_n$ when the outputs are not compromised. The asymptotic behaviour of $I_n$ is erratic, though perhaps we could still argue that there is some tendency to get closer to $1/2$. Nevertheless, the sequence $B_n$ is clearly declining  and thus asymptotically bounded, and we therefore decisively conclude (Case 2(ii) of the rule of thumb) that the system is not compromised.

We finish the present section by recalling the earlier note (see below (\ref{condition-0b})) that the boundedness of the derivative $(\mathrm{d}/\mathrm{d}u)h\circ F^{-1}(u)$ can be relaxed. This is indeed possible due to another theorem of Gribkova and Zitikis (2018), which is more complex than Theorem~\ref{Thm-C} and allows the derivative $(\mathrm{d}/\mathrm{d}u)h\circ F^{-1}(u)$ to grow at the two endpoints of its domain of definition $[0,1]$. Nevertheless, the simpler case covered by  Theorem~\ref{Thm-C} is sufficiently encompassing and quite attractive from the practical point of view.

\section{Deterministic inputs and vulnerability testing}
\label{determ}

Whether the rule of thumb detects intrusions or not, we can still wish to double-check the finding, as we did near the end of the previous section. Furthermore, there can even be a necessity to test the control system's vulnerability (e.g., Hug and Giampapa, 2012; and references therein). In such cases, instead of the pre-whitened random inputs $X_1,\dots , X_n$, it is natural to feed into the system deterministic inputs, such as
\begin{equation}\label{eq-1c}
x_{i,n}=a+(b-a){i-1 \over n-1}, \quad i=1,\dots , n,
\end{equation}
and then apply the rule of thumb.

As an illustration, we go back to Figures~\ref{fig-contam-confuse} and~\ref{fig-pure-confuse}, which concern the case $h(a)=h(b)$, and reassess our findings using  deterministic inputs (\ref{eq-1c}) with $a=0$ and $b=8/5$. In Figure~\ref{fig-deter},
\begin{figure}[t!]
\centering
\subfigure[$I_n$ when $\sigma^2_{\varepsilon}=0.01$.]{%
\includegraphics[height=0.3\textwidth]{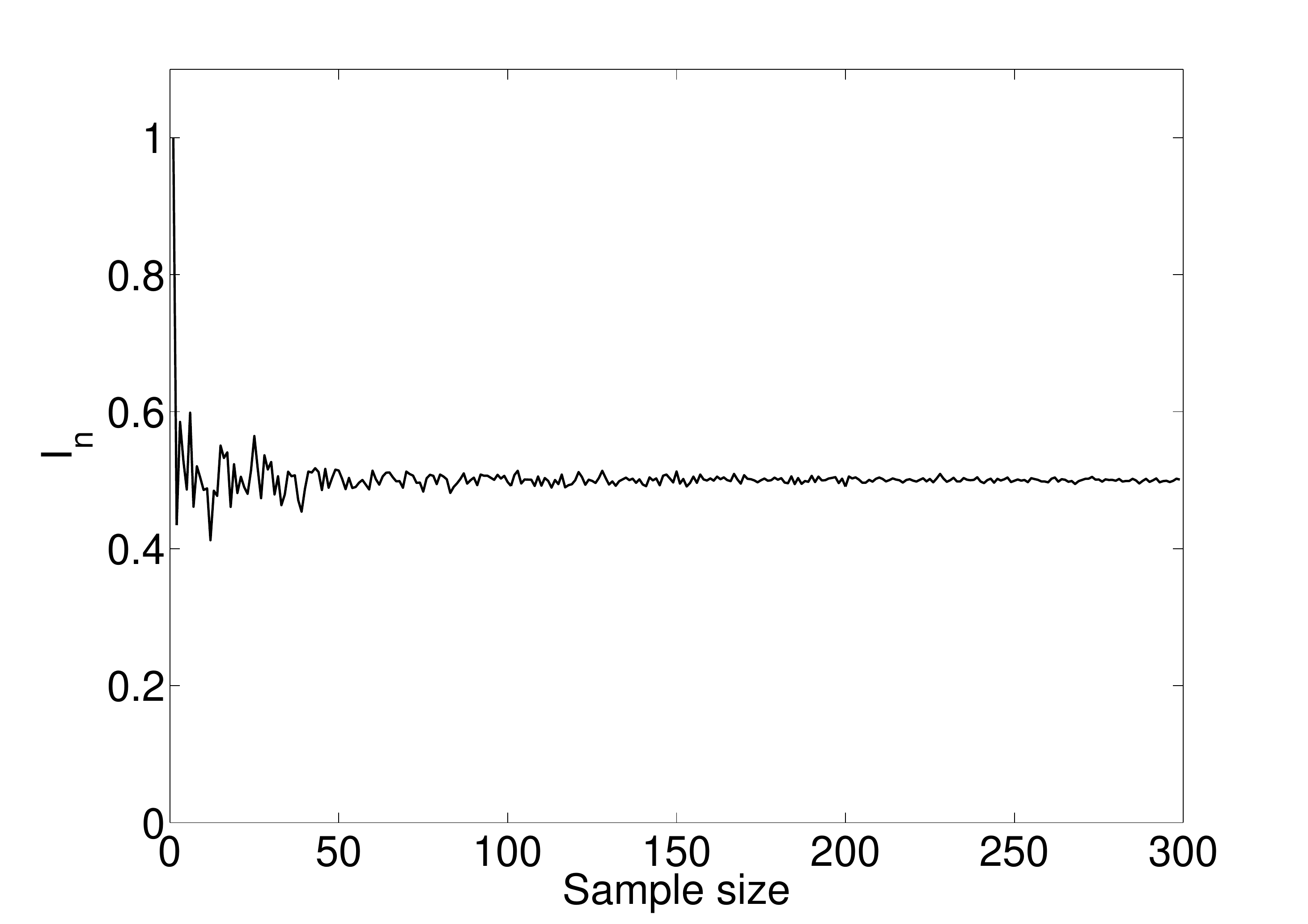}}
\hfill
\subfigure[$B_n$ when $\sigma^2_{\varepsilon}=0.01$.]{%
\includegraphics[height=0.3\textwidth]{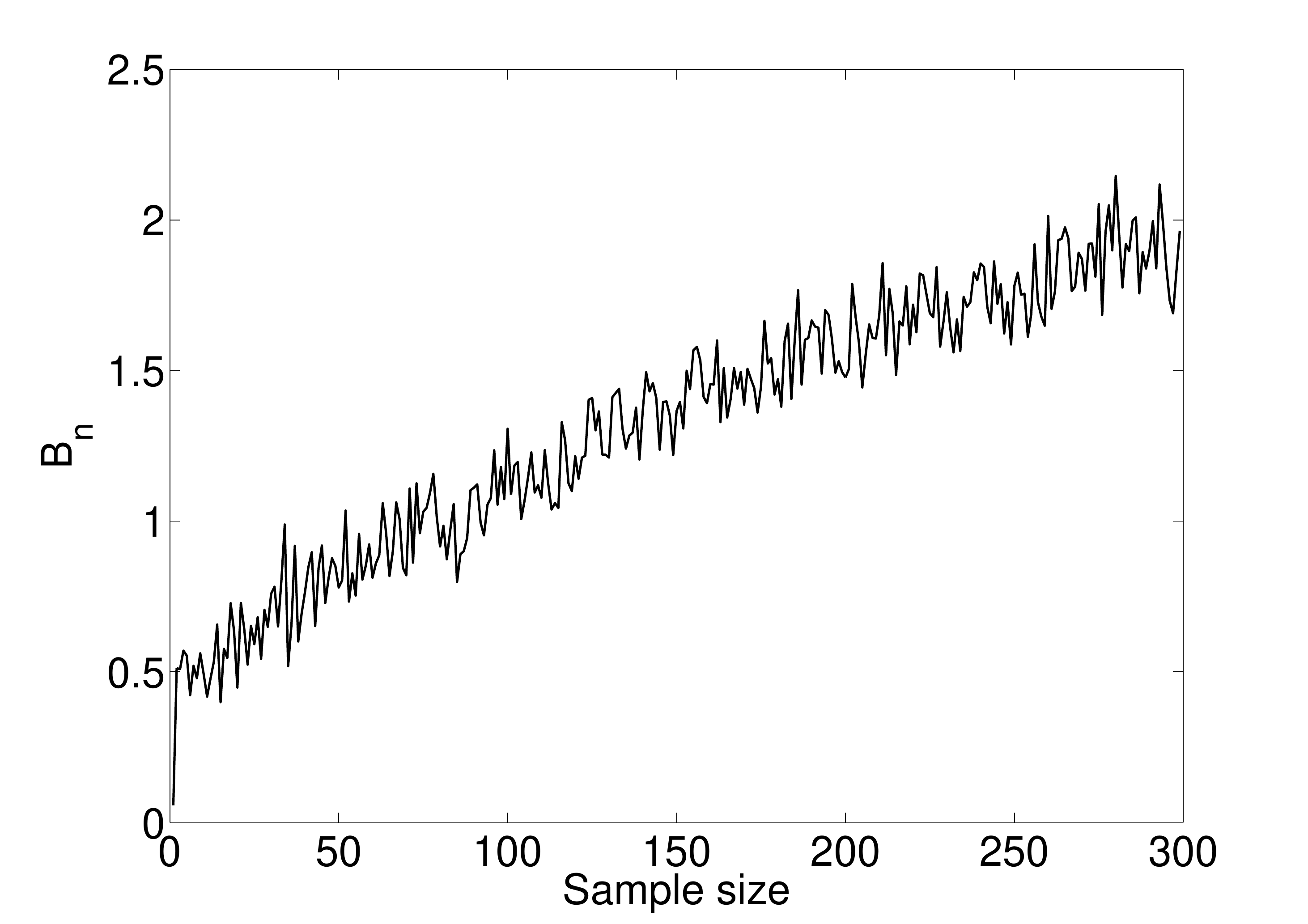}}
\\
\subfigure[$I_n$ when $\sigma^2_{\varepsilon}=0$.]{%
\includegraphics[height=0.3\textwidth]{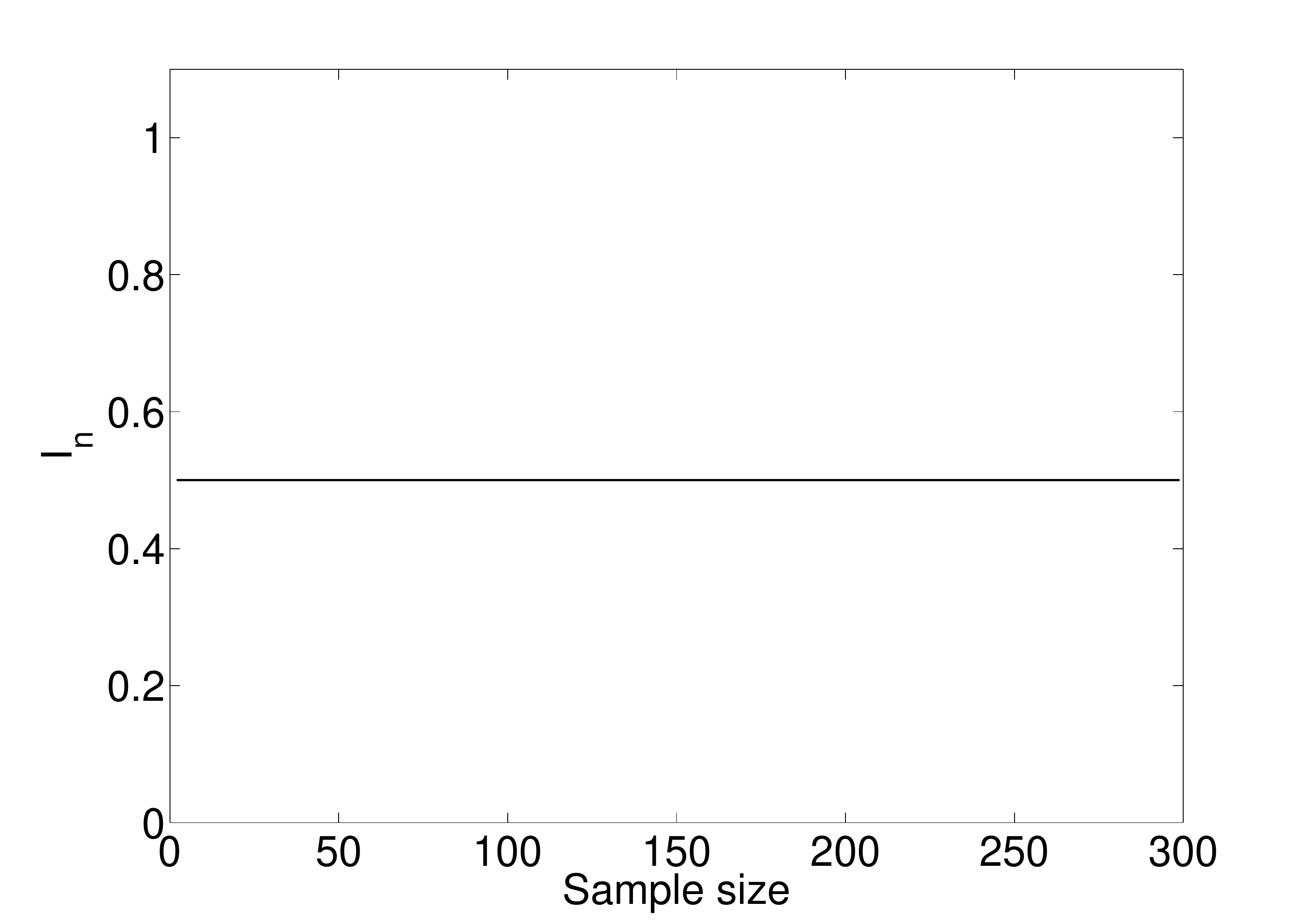}}
\hfill
\subfigure[$B_n$ when $\sigma^2_{\varepsilon}=0$.]{%
\includegraphics[height=0.3\textwidth]{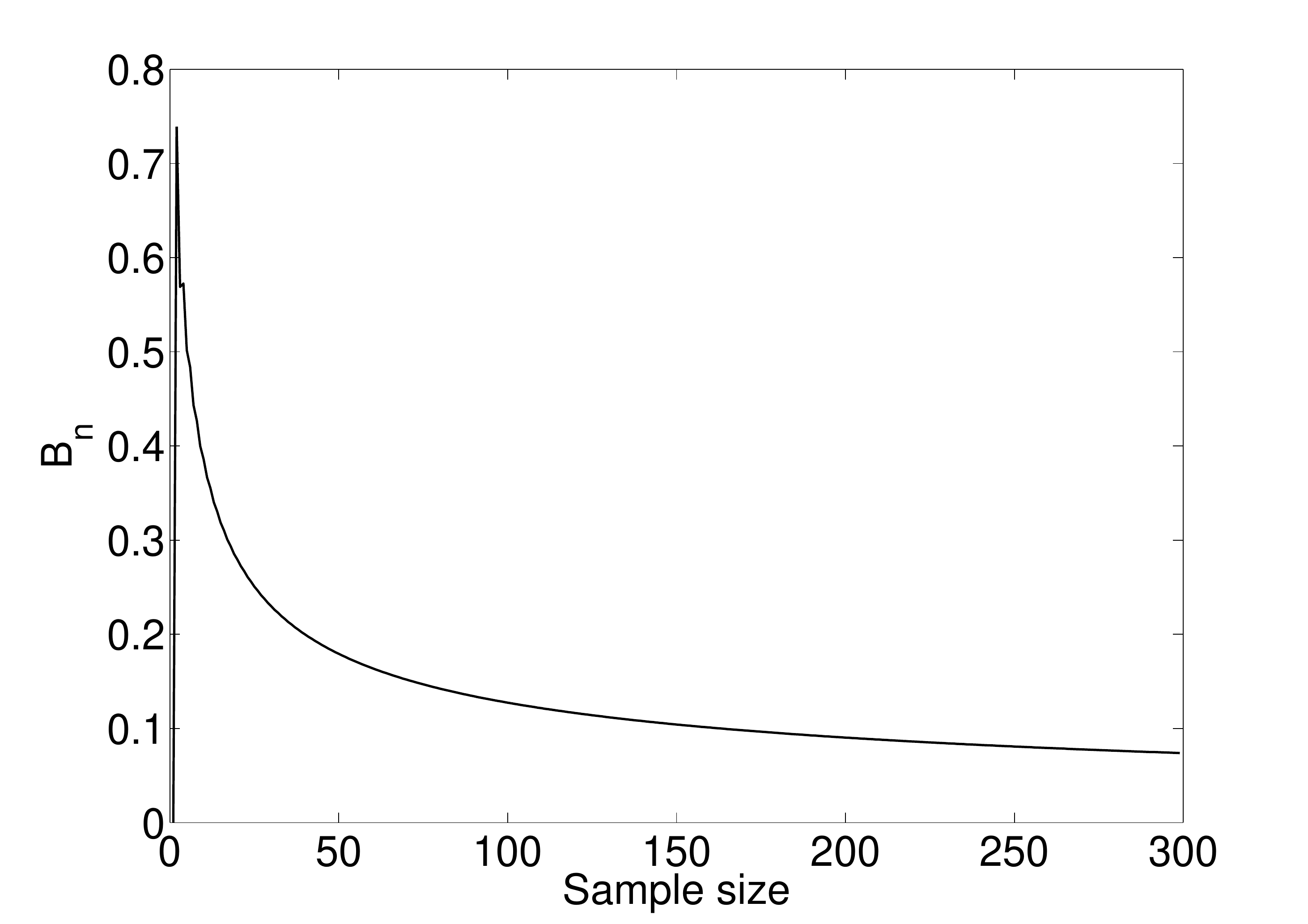}}
\caption{$I_n$ and $B_n$ with respect to the sample size $1\le n \le 300$ when inputs are deterministic $(8/5)(i-1)/(n-1)$, $i=1,\dots , n$,  the transfer function is $h(x)=1-(x-4/5)^2$ on $[0,8/5]$, and the outputs are, or are not, compromised by Gaussian intrusions $\varepsilon_i\sim N(0,\sigma^2_{\varepsilon})$.}
\label{fig-deter}
\end{figure}
we present both compromised and non-compromised cases. In the compromised case (the two top panels), the convergence of $I_n$ to $1/2$ is even faster than in Figure~\ref{fig-contam-confuse}, whereas the sequence $B_n$ grows in a similar fashion as in Figure~\ref{fig-contam-confuse}. Hence, using the deterministic inputs, we reach the same conclusion as before (i.e., the presence of intrusion variables) but in a much faster fashion.

In the non-contaminated case (the two bottom panels of Figure~\ref{fig-deter}), $I_n$ not just converges to $1/2$ but is actually equal to $1/2$, which may give the impression about the presence of intrusions due to $h(a)=h(b)$. However, the declining and thus asymptotically bounded sequence $B_n$ removes all the doubts by correctly implying the absence of intrusions (Case 2(ii) of the rule of thumb). Hence, paying particular attention to the asymptotic behaviour of $B_n$ is indeed enlightening. As to $I_n$ and related interpretations, we refer to Chen at al.~(2018), and Gribkova and Zitikis~(2018).

In Section~\ref{rule} we presented three theorems that had justified the rule of thumb when inputs were random, but the theorems, naturally, did not cover the deterministic case. The following three theorems, which mimic though are not identical to  Theorems~\ref{Thm-A}--\ref{Thm-C}, show that the rule of thumb works in the case of deterministic inputs as well.

\begin{theorem}\label{Thm-Adet}
The deterministic inputs given by equation (\ref{eq-1c}) produce outputs in reasonable order, as per Definition~\ref{order}, that is, $B_n^0$ with $x_{i,n}$ instead of $X_{i:n}$ is asymptotically bounded. Hence, if the sequence $B_n$ is also asymptotically bounded and the intrusion variables are iid $\varepsilon_{1},\dots , \varepsilon_{n}\sim F_{\varepsilon}$ with means zero, then the cdf $F_{\varepsilon} $ must be degenerate (at point~$0$).
\end{theorem}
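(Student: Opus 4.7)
The plan is to argue that the two assertions follow by essentially rerunning the proof of Theorem~\ref{Thm-A}, with the probabilistic estimate for the $h$-differences replaced by a sharper deterministic one. Specifically, I would first show that $B_n^0$ is in fact $O(n^{-1/2})$ for the regular grid $x_{i,n}$, and then recycle the concomitant-based lower bound for $B_n$ in the deterministic-input setting.

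For the first step, the inputs are ordered by construction and satisfy $x_{i,n}-x_{i-1,n}=(b-a)/(n-1)$, so the mean value theorem applied to the bounded-derivative function $h$ gives
\begin{equation*}
B_n^0 \;=\; \frac{1}{\sqrt{n}}\sum_{i=2}^{n} |h'(\xi_{i,n})|\,\frac{b-a}{n-1} \;\le\; \frac{(b-a)\,\|h'\|_\infty}{\sqrt{n}},
\end{equation*}
which is $O(n^{-1/2})$ and therefore, a fortiori, asymptotically bounded. This proves the first assertion (and in fact with room to spare compared to the bound demanded by Definition~\ref{order}).

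For the second step, because the deterministic inputs are already in increasing order, the concomitants collapse to $Y_{i,n}=h(x_{i,n})+\varepsilon_i$ with no relabelling, so the $\varepsilon_i$'s themselves play the role of $\varepsilon_{[i]}$. The reverse triangle inequality, exactly as in the proof of Theorem~\ref{Thm-A}, then yields
\begin{equation*}
B_n \;\ge\; \frac{1}{\sqrt{n}}\sum_{i=2}^{n}\Bigl(|\varepsilon_i-\varepsilon_{i-1}|-\mathbf{E}|\varepsilon_2-\varepsilon_1|\Bigr) +\frac{n-1}{\sqrt{n}}\,\mathbf{E}|\varepsilon_2-\varepsilon_1| + O(n^{-1/2}).
\end{equation*}
The sequence $\{|\varepsilon_i-\varepsilon_{i-1}|\}_{i\ge 2}$ is strictly stationary and $1$-dependent, and has finite variance under the standing assumption $\sigma_\varepsilon^2<\infty$; a direct variance calculation (or splitting the index set by parity into two iid subsequences) shows that the centered partial sum has variance $O(n)$, so by Chebyshev's inequality the first term above is $O_\mathbf{P}(1)$. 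The displayed lower bound then reduces to $B_n \ge \sqrt{n}\,\mathbf{E}|\varepsilon_2-\varepsilon_1|+O_\mathbf{P}(1)$, and the hypothesized asymptotic boundedness of $B_n$ forces $\mathbf{E}|\varepsilon_2-\varepsilon_1|=0$, hence $\varepsilon_1=\varepsilon_2$ almost surely; combined with $\mathbf{E}\varepsilon_1=0$, this pins $F_\varepsilon$ to the point mass at~$0$.

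The only mildly delicate point I anticipate is controlling the centered $1$-dependent sum, but this is routine once one splits by parity; the rest of the argument is a transcription of the proof of Theorem~\ref{Thm-A} with the deterministic estimate $B_n^0=O(n^{-1/2})$ replacing the probabilistic $O_\mathbf{P}(1)$ used there.
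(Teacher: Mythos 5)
Your proposal is correct and follows essentially the same route as the paper: a deterministic $O(n^{-1/2})$ bound on $B_n^0$ from the bounded derivative of $h$ on the equispaced grid, followed by the concomitant lower bound $B_n\ge \sqrt{n}\,\mathbf{E}|\varepsilon_2-\varepsilon_1|+O_{\mathbf{P}}(1)$ recycled from Theorem~\ref{Thm-A}. Your explicit treatment of the centered $1$-dependent sum (parity splitting plus Chebyshev) merely fills in a step the paper leaves implicit.
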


\begin{proof}
Since the derivative $h'(x)$ exists and is uniformly bounded on $[a,b]$, we have
\begin{align}
B_n^0&={1\over \sqrt{n}} \sum_{i=2}^n \big | h(x_{i,n})-h(x_{i-1,n}) \big |
\notag
\\
&\le {c\over \sqrt{n}} \sum_{i=2}^n \big | x_{i,n}-x_{i-1,n} \big |
\notag \\
&=O_{\mathbf{P}}(n^{-1/2}).
\label{bound-b0}
\end{align}
Hence, just like in the proof of Theorem~\ref{Thm-A}, we conclude that $B_n\ge \sqrt{n}~ \mathbf{E}\big [|\varepsilon_{2}-\varepsilon_{1}|\big ]+ O_{\mathbf{P}}(1)$, which in turn implies that the cdf $F_{\varepsilon} $ is degenerate.
\end{proof}

Reflecting upon the proof of Theorem~\ref{Thm-Adet}, we actually need only $B_n^0= O_{\mathbf{P}}(1)$ and can thus relax the uniform boundedness of $h'(x)$. For example, we can require the transfer function $h(x)$ to be $\gamma$-H\"{o}lder continuous for some $\gamma \ge 1/2$, which means that there is a constant $c<\infty $ such that  $|h(x)-h(y)|\le c|x-y|^{\gamma }$ for all $x,y\in [a,b]$. As an illustration, the transfer function $h(x)$ with a uniformly on $[a,b]$ bounded derivative corresponds to the case $\gamma=1$. For the sake of clarity, however, we continue working with $h(x)$ whose derivative is uniformly bounded, which is a practically attractive and justifiable assumption.

\begin{theorem}
\label{Thm-Bdet}
Let the inputs be deterministic and given by equation (\ref{eq-1c}), and let the intrusion variables $\varepsilon_{1},\dots , \varepsilon_{n}$ follow a non-degenerate cdf $F_{\varepsilon} $. Then $I_n$ converges to $ 1/2$.
\end{theorem}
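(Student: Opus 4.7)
The plan is to exploit the elementary identity $x_+=(x+|x|)/2$: applied termwise inside the numerator of $I_n$, it decomposes $I_n$ into exactly $1/2$ plus a correction controlled by the telescoping sum $\sum_{i=2}^n(Y_{i,n}-Y_{i-1,n})=Y_{n,n}-Y_{1,n}$. Because the deterministic inputs $x_{1,n}<\cdots<x_{n,n}$ are already in increasing order, the concomitants reduce to $Y_{i,n}=h(x_{i,n})+\varepsilon_i$, so the telescoping endpoint difference equals $h(b)-h(a)+\varepsilon_n-\varepsilon_1$. This quantity is $O_{\mathbf{P}}(1)$ in view of the assumed finite variance of the $\varepsilon_i$'s.

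Explicitly, the identity above yields
\[
I_n=\frac{1}{2}+\frac{h(b)-h(a)+\varepsilon_n-\varepsilon_1}{2\sqrt{n}\, B_n},
\]
so the theorem reduces to proving $\sqrt{n}\, B_n\stackrel{\mathbf{P}}{\to}\infty$. Mimicking the argument in the proof of Theorem~\ref{Thm-A}, a reverse-triangle-inequality split combined with the bound (\ref{bound-b0}) gives $B_n\ge n^{-1/2}\sum_{i=2}^n|\varepsilon_i-\varepsilon_{i-1}|-O(n^{-1/2})$. The non-degeneracy of $F_\varepsilon$ together with the independence of $\varepsilon_1,\varepsilon_2$ ensures that $\mu:=\mathbf{E}|\varepsilon_2-\varepsilon_1|$ is strictly positive.

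The one technical point requiring care is that the sequence $\{|\varepsilon_i-\varepsilon_{i-1}|\}_{i\ge 2}$ is \emph{not} iid — consecutive terms share a variable — so a direct appeal to the weak law of large numbers is not quite available. I would handle this by splitting the sum according to the parity of $i$ into two subsums of genuinely iid summands, each of length approximately $n/2$, and applying the ordinary weak law of large numbers separately to each. This yields $n^{-1}\sum_{i=2}^n|\varepsilon_i-\varepsilon_{i-1}|\stackrel{\mathbf{P}}{\to}\mu$, hence $B_n\ge \sqrt{n}\,\mu(1+o_{\mathbf{P}}(1))$, and the correction term in the displayed expression for $I_n$ is therefore $O_{\mathbf{P}}(1)/(\sqrt{n}\,B_n)=o_{\mathbf{P}}(1)$. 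This establishes $I_n\stackrel{\mathbf{P}}{\to}1/2$. The rest is bookkeeping: the parity-splitting trick is the only nontrivial ingredient, and it sidesteps the need to invoke any dependent-sequence LLN.
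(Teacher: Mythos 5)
Your proof is correct, and it takes a genuinely different route from the paper's. The paper computes the first-order asymptotics of the numerator and denominator separately, showing $A_n=\sqrt{n}\,\mathbf{E}[(\varepsilon_2-\varepsilon_1)_+]+O_{\mathbf{P}}(1)$ and $B_n=\sqrt{n}\,\mathbf{E}[|\varepsilon_2-\varepsilon_1|]+O_{\mathbf{P}}(1)$, and then identifies the limiting ratio as $1/2$ by invoking the symmetry of the distribution of $\varepsilon_2-\varepsilon_1$. You instead get the $1/2$ purely algebraically from $x_+=(x+|x|)/2$ and telescoping, which reduces everything to the single claim $\sqrt{n}\,B_n\stackrel{\mathbf{P}}{\to}\infty$; the symmetry argument is not needed at all, and the exact identity $I_n=\tfrac12+(Y_{n,n}-Y_{1,n})/(2\sqrt{n}\,B_n)$ makes transparent why $1/2$ is the universal limit whenever $B_n$ blows up (this is essentially the mechanism behind Theorem~\ref{Thm-B}, rederived by hand in this special case). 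What the paper's route buys is the sharper quantitative information $B_n\sim\sqrt{n}\,\mathbf{E}|\varepsilon_2-\varepsilon_1|$, which parallels the proof of Theorem~\ref{Thm-A}; what your route buys is economy of hypotheses (you only need tightness of $h(b)-h(a)+\varepsilon_n-\varepsilon_1$ and a lower bound on $B_n$) and an explicit treatment of a point the paper leaves implicit, namely that $\{|\varepsilon_i-\varepsilon_{i-1}|\}$ is $1$-dependent rather than iid, which you handle cleanly by the parity split before applying the weak law of large numbers. Both arguments rely on the same two external facts: the bound (\ref{bound-b0}) on the deterministic part and the strict positivity of $\mathbf{E}|\varepsilon_2-\varepsilon_1|$ under non-degeneracy.
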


\begin{proof}
Given the facts noted in the first half of the proof of Theorem~\ref{Thm-A}, and also using bound (\ref{bound-b0}), we have
\begin{align*}
B_n&= {1\over \sqrt{n}} \sum_{i=2}^n | \varepsilon_{[i]}-\varepsilon_{[i-1]} |+O(1)
\\
&= {1\over \sqrt{n}} \sum_{i=2}^n \Big (| \varepsilon_{[i]}-\varepsilon_{[i-1]} |
- \mathbf{E}\big [| \varepsilon_{[i]}-\varepsilon_{[i-1]} |\big ]\Big )
+ {n-1\over \sqrt{n}} \mathbf{E}\big [| \varepsilon_{[i]}-\varepsilon_{[i-1]} |\big ]
+ O(1)
\\
&=\sqrt{n}~ \mathbf{E}\big [|\varepsilon_{2}-\varepsilon_{1}|\big ]+ O_{\mathbf{P}}(1).
\end{align*}
Following analogous arguments, we prove that, when $n\to \infty $,  
\[
A_n:={1\over \sqrt{n}} \sum_{i=2}^n \big ( Y_{i,n}-Y_{i-1,n} \big )_{+}
=\sqrt{n}~ \mathbf{E}\big [(\varepsilon_{2}-\varepsilon_{1})_{+}\big ]+ O_{\mathbf{P}}(1). 
\]
Since $I_n=A_n/B_n$, we have
\begin{equation}
I_n={\mathbf{E}\big [ (\varepsilon_{2}-\varepsilon_{1} )_{+}\big ]+ O_{\mathbf{P}}(n^{-1/2})
\over
\mathbf{E}\big [|\varepsilon_{2}-\varepsilon_{1}|\big ]+ O_{\mathbf{P}}(n^{-1/2})}
\stackrel{\mathbf{P}}{\longrightarrow}
{\mathbf{E}\big [ (\varepsilon_{2}-\varepsilon_{1} )_{+}\big ]
\over \mathbf{E}\big [|\varepsilon_{2}-\varepsilon_{1}|\big ]}
\label{eq-30}
\end{equation} 
when $n\to \infty $. The ratio on the right-hand side of statement (\ref{eq-30}) is equal to $1/2$ because the random variable $\varepsilon_{2}-\varepsilon_{1}$ is symmetric  irrespective of the cdf $F_{\varepsilon} $. This proves Theorem~\ref{Thm-Bdet}. 
\end{proof}

\begin{theorem}[Chen et al., 2018]
\label{Thm-Cdet}
Let the inputs be deterministic and given by equation (\ref{eq-1c}). Furthermore, assume that the outputs are not compromised by intrusion variables, that is, the cdf $F_{\varepsilon} $ is degenerate. If the derivative $h'(x)$ is $\gamma$-H\"{o}lder continuous for some $\gamma >0$, then $I_n$ converges to
\[
I(h):={\int_{a}^{b} (h'(u))_{+}\mathrm{d}u \over \int_{a}^{b}|h(u)|\mathrm{d}u}.
\]
\end{theorem}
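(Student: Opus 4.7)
The plan is to exploit the fact that, under the stated assumptions, the problem collapses to a purely deterministic Riemann-sum computation with no probabilistic content. Since $\varepsilon_i\equiv 0$ and the deterministic inputs $x_{i,n}$ are already sorted in increasing order, the concomitants reduce to $Y_{i,n}=h(x_{i,n})$, so each consecutive difference $Y_{i,n}-Y_{i-1,n}$ equals $h(x_{i,n})-h(x_{i-1,n})$ with constant spacing $\delta_n:=(b-a)/(n-1)$.

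First, I would apply the mean value theorem on each subinterval to write $h(x_{i,n})-h(x_{i-1,n})=h'(\xi_i)\delta_n$ for some $\xi_i\in[x_{i-1,n},x_{i,n}]$. The factors of $1/\sqrt{n}$ in the definitions of $A_n$ and $B_n$ cancel in the ratio $I_n=A_n/B_n$, leaving
\[
I_n=\frac{\delta_n\sum_{i=2}^n (h'(\xi_i))_{+}}{\delta_n\sum_{i=2}^n |h'(\xi_i)|},
\]
which exhibits $I_n$ as a ratio of two Riemann sums for $\int_a^b (h'(u))_{+}\,\mathrm{d}u$ and $\int_a^b |h'(u)|\,\mathrm{d}u$, respectively, over the uniform partition of $[a,b]$ with mesh $\delta_n$.

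Second, I would quantify the Riemann-sum error using the $\gamma$-H\"{o}lder continuity of $h'$. Since both $u\mapsto u_{+}$ and $u\mapsto |u|$ are $1$-Lipschitz, for every $u\in[x_{i-1,n},x_{i,n}]$ one has $|(h'(\xi_i))_{+}-(h'(u))_{+}|\le |h'(\xi_i)-h'(u)|\le c\,\delta_n^{\gamma}$, and similarly for the absolute-value summand. Integrating this pointwise bound over each subinterval and summing over the $n-1$ subintervals of total length $b-a$ shows that each of the two Riemann sums differs from its target integral by $O(n^{-\gamma})$, which vanishes as $n\to\infty$ for any $\gamma>0$.

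Finally, passing to the limit in the ratio and noting that the denominator integral is strictly positive whenever $h$ is non-constant (the only interesting case) yields the convergence $I_n\to I(h)$. The only real technical step is the Riemann-sum error control carried out in the second paragraph, and the H\"{o}lder hypothesis handles it cleanly; determinism of the inputs and absence of intrusions eliminate every probabilistic element, so in contrast with Theorems~\ref{Thm-A}--\ref{Thm-C} no tightness or stochastic-convergence machinery is required.
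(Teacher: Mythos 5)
Your argument is correct, and there is nothing in the paper to compare it against: Theorem~\ref{Thm-Cdet} is imported from Chen et al.\ (2018) and stated without proof, so your self-contained derivation is an addition rather than a variant. The reduction is exactly right: with $\varepsilon_i\equiv 0$ and sorted deterministic inputs, the mean value theorem and the positivity of $\delta_n=(b-a)/(n-1)$ give $I_n=\sum_{i=2}^n(h'(\xi_i))_{+}\big/\sum_{i=2}^n|h'(\xi_i)|$, the $1$-Lipschitz property of $x\mapsto x_{+}$ and $x\mapsto|x|$ transfers the $\gamma$-H\"{o}lder modulus of $h'$ to the summands, and the resulting $O(n^{-\gamma})$ Riemann-sum error vanishes (plain continuity of $h'$ would already give convergence; the H\"{o}lder hypothesis only buys a rate). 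Two remarks. First, your limit has $\int_{a}^{b}|h'(u)|\,\mathrm{d}u$ in the denominator while the theorem as printed has $\int_{a}^{b}|h(u)|\,\mathrm{d}u$; the printed version is evidently a typo, since the paper's own discussion following Theorem~\ref{Thm-C} characterizes $I(h)=1/2$ by $\int_{a}^{b}(h'(u))_{+}\,\mathrm{d}u=\int_{a}^{b}(h'(u))_{-}\,\mathrm{d}u$ and hence by equation~(\ref{equality}), which only makes sense with $|h'|$ in the denominator --- precisely what your computation produces. Second, the hypothesis that $h$ is not constant (so that $\int_{a}^{b}|h'(u)|\,\mathrm{d}u>0$ and the ratio is not $0/0$) should be stated explicitly, mirroring the ``not identically equal to $0$'' clause of Theorem~\ref{Thm-C}; you do acknowledge this, so the gap is only one of presentation.
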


Hence, in view of Theorems~\ref{Thm-Adet}--\ref{Thm-Cdet}, we conclude that if the control system is in reasonable order as per Definition~\ref{order}, then even when the inputs are deterministic, the rule of thumb can distinguish between compromised and non-compromised outputs. To illustrate, in Figure~\ref{fig-noise-deter}
\begin{figure}[t!]
\centering
\subfigure[$I_n$ when $X_i\sim \mathrm{Unif}(0,1)$.]{%
\includegraphics[height=0.3\textwidth]{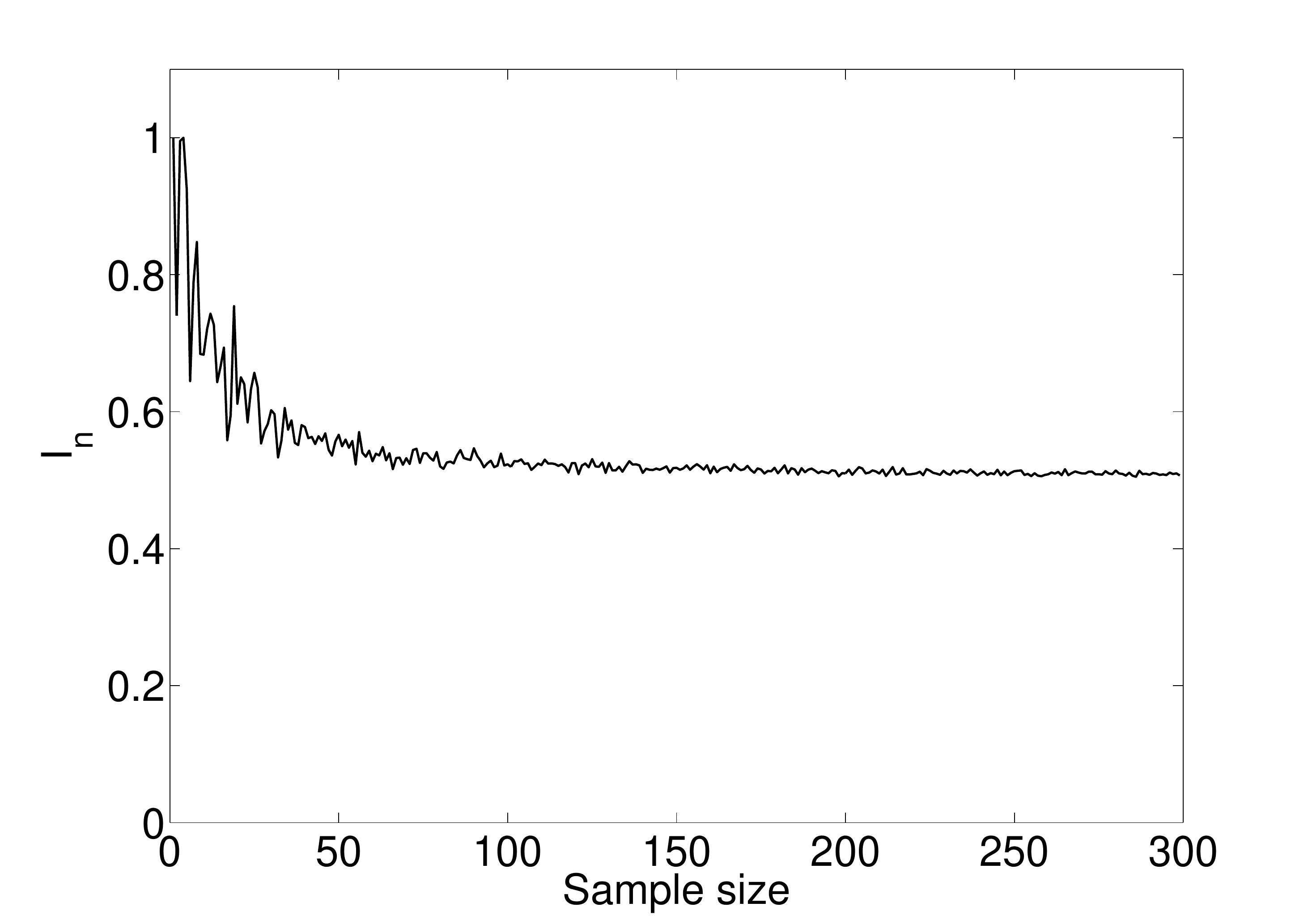}}
\hfill
\subfigure[$B_n$ when $X_i\sim \mathrm{Unif}(0,1)$.]{%
\includegraphics[height=0.3\textwidth]{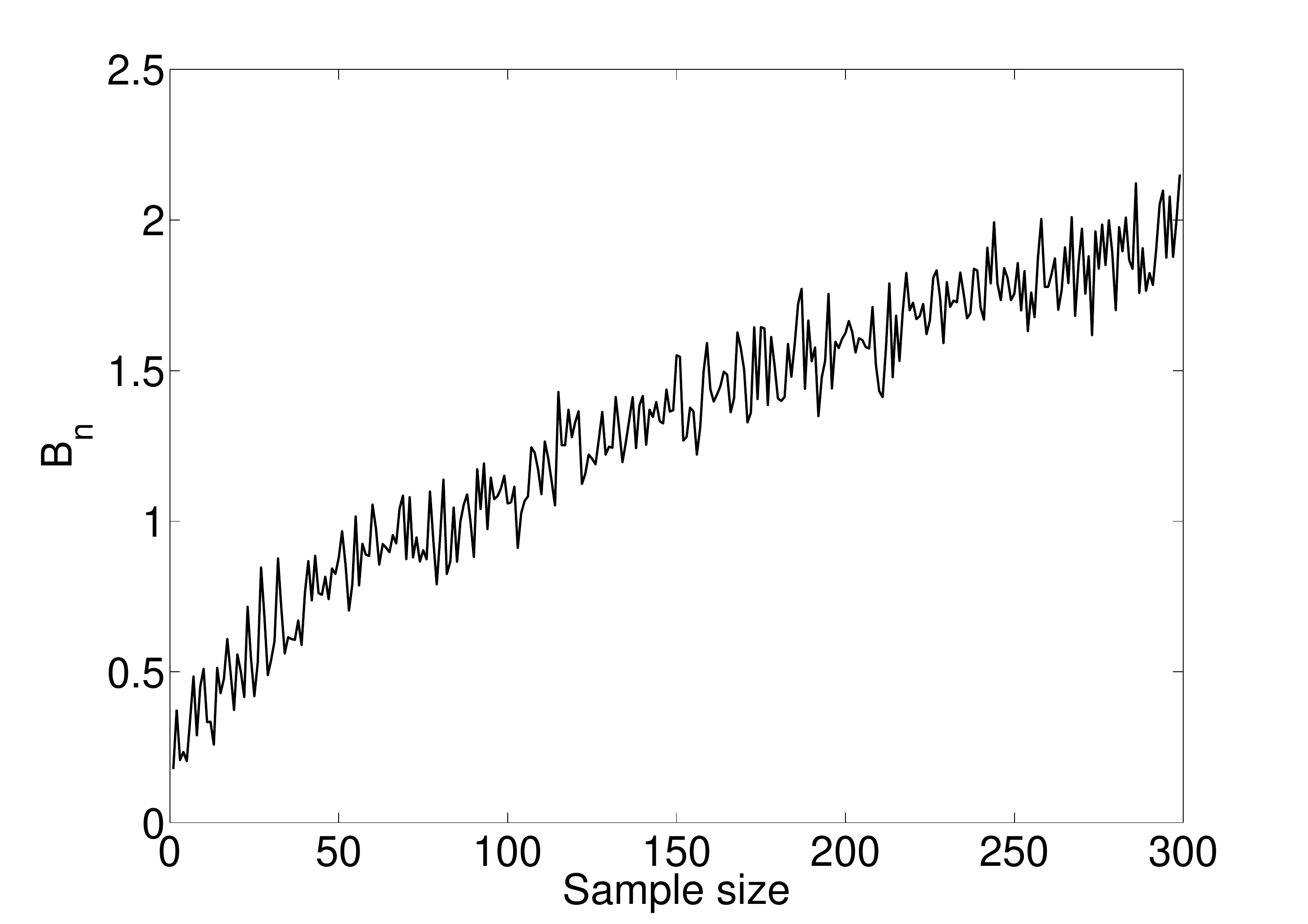}}
\\
\subfigure[$I_n$ when inputs are ${i-1 \over n-1}$, $i=1,\dots , n$.]{%
\includegraphics[height=0.3\textwidth]{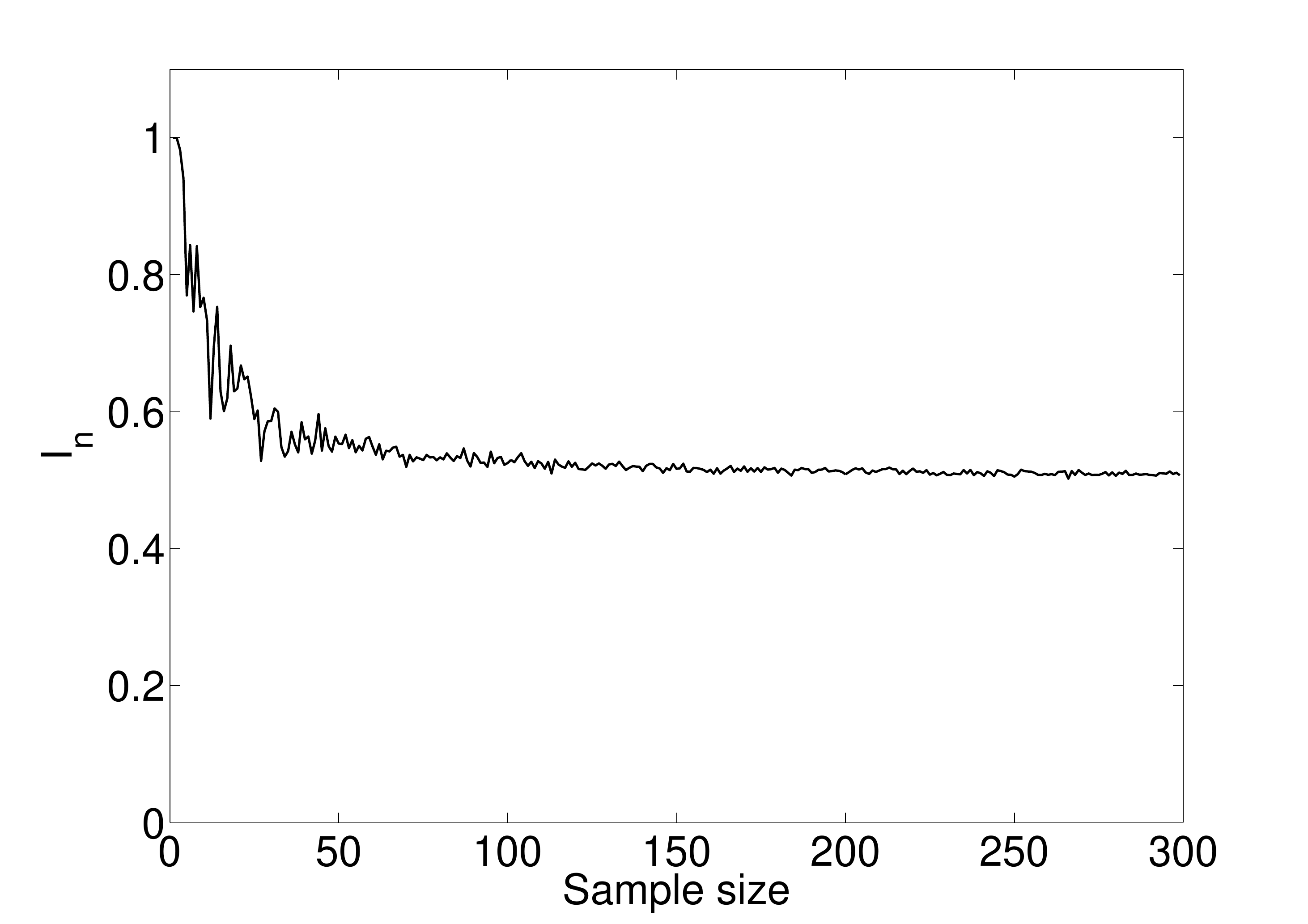}}
\hfill
\subfigure[$B_n$ when inputs are ${i-1 \over n-1}$, $i=1,\dots , n$.]{%
\includegraphics[height=0.3\textwidth]{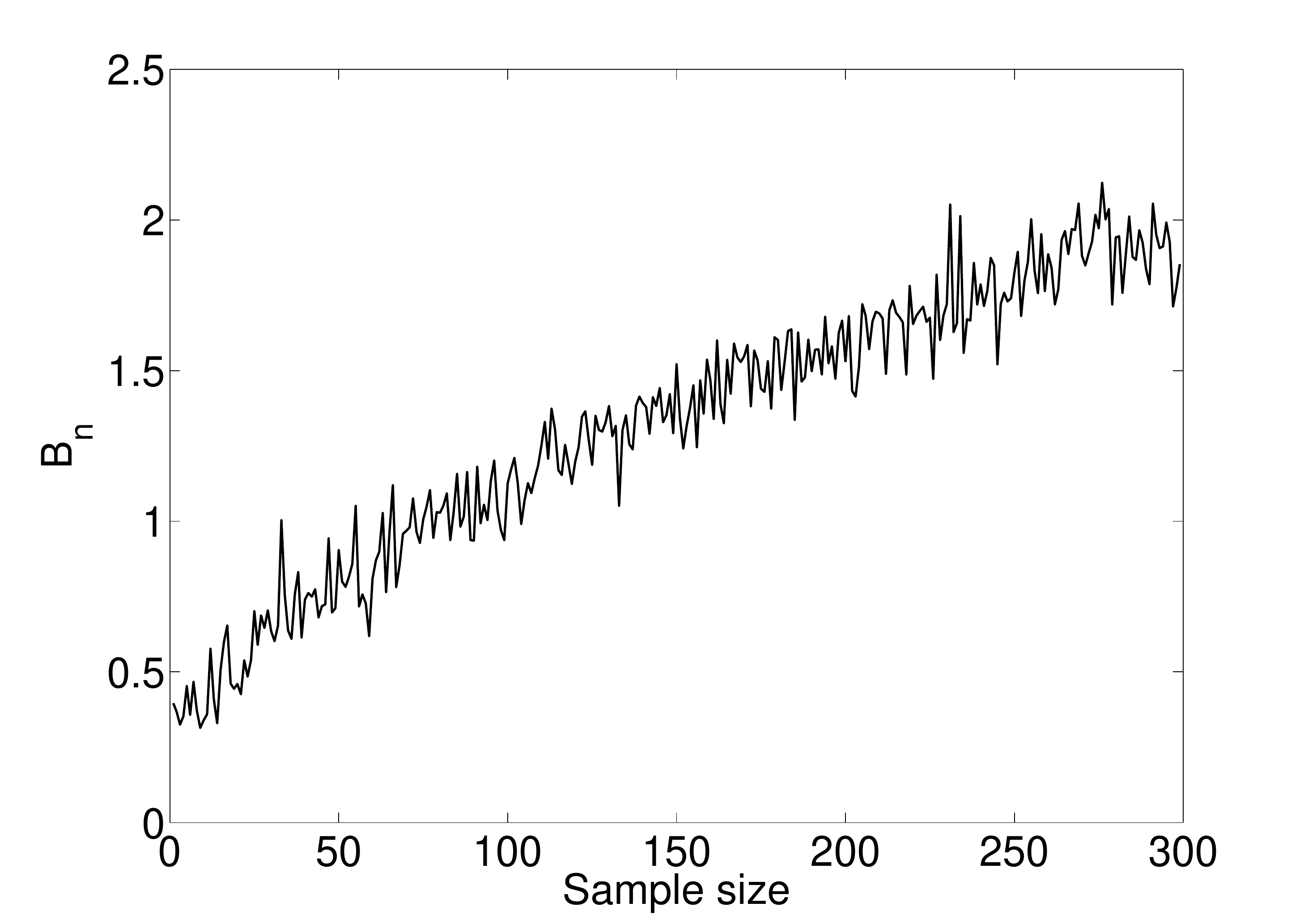}}
\caption{$I_n$ and $B_n$ with respect to the sample size $1\le n \le 300$ when $h(x)=1-(x-4/5)^2$ on $[0,1]$, and compromised outputs with Gaussian intrusions  $\varepsilon_i\sim N(0,\sigma^2_{\varepsilon})$ with $\sigma^2_{\varepsilon}=0.01$.}
\label{fig-noise-deter}
\end{figure}
we depict the asymptotic behaviour of $I_n$ and $B_n$ in the case of random and deterministic inputs, and when the outputs are being compromised by intrusions $\varepsilon_{1},\dots , \varepsilon_{n}$, which for illustrative purposes are assumed to be iid Gaussian with the same strictly-positive variances. The two types of inputs, though different in nature, are uniform in their respective ways: in the random case, they give rise to the uniform on $[0,1]$ order statistics, whereas in the deterministic case, the points $(i-1)/(n-1)$, $i=1,\dots , n$, split the interval $[0,1]$ into perfectly equal subintervals of length $1/(n-1)$. The asymptotic behaviours of $I_n$ and $B_n$ in the two cases are more or less identical, unlike what we shall see next, when the system is not compromised.

Figure~\ref{fig-pure-deter}
\begin{figure}[t!]
\centering
\subfigure[$I_n$ when $X_i\sim \mathrm{Unif}(0,1)$.]{%
\includegraphics[height=0.3\textwidth]{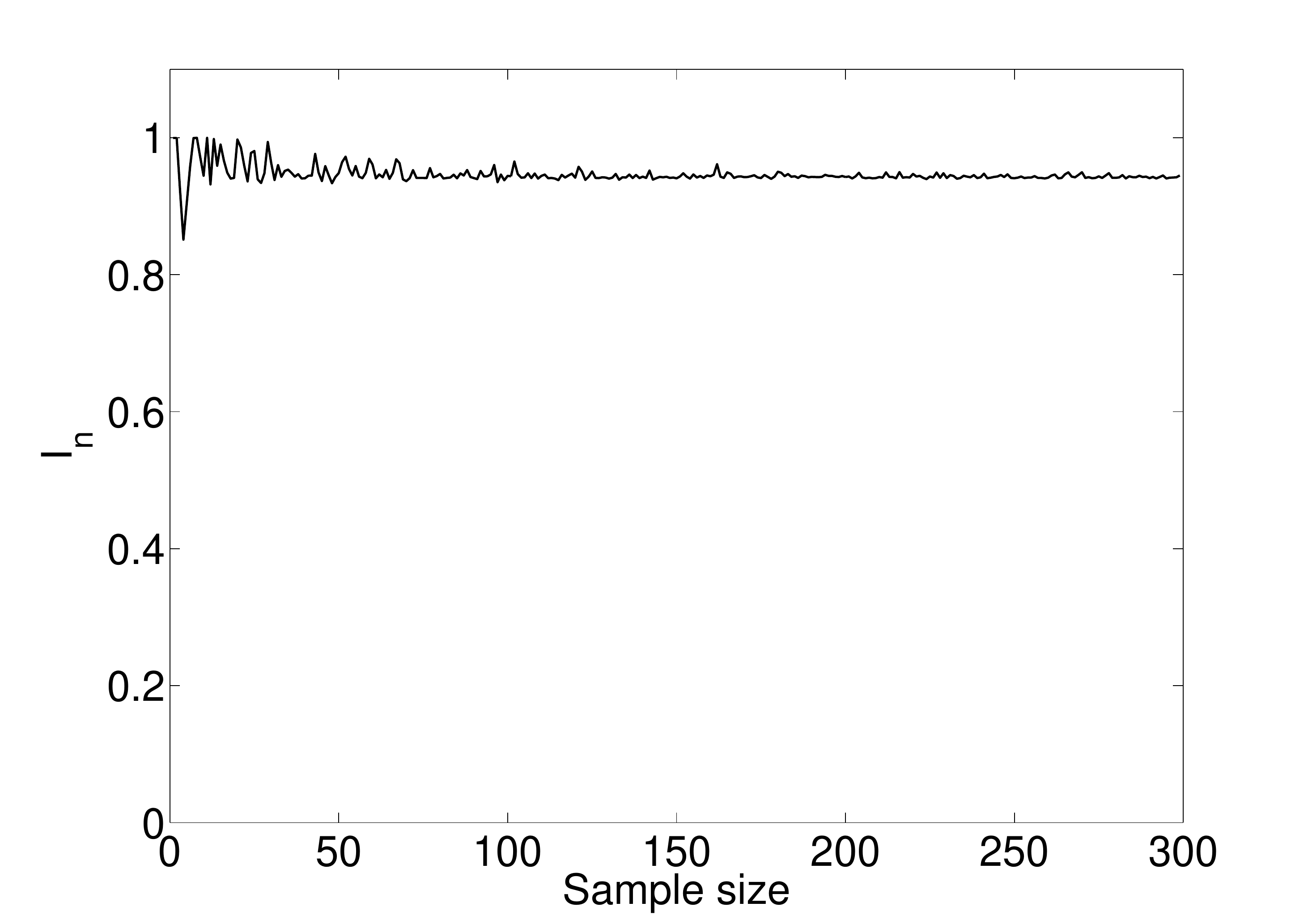}}
\hfill
\subfigure[$B_n$ when $X_i\sim \mathrm{Unif}(0,1)$.]{%
\includegraphics[height=0.3\textwidth]{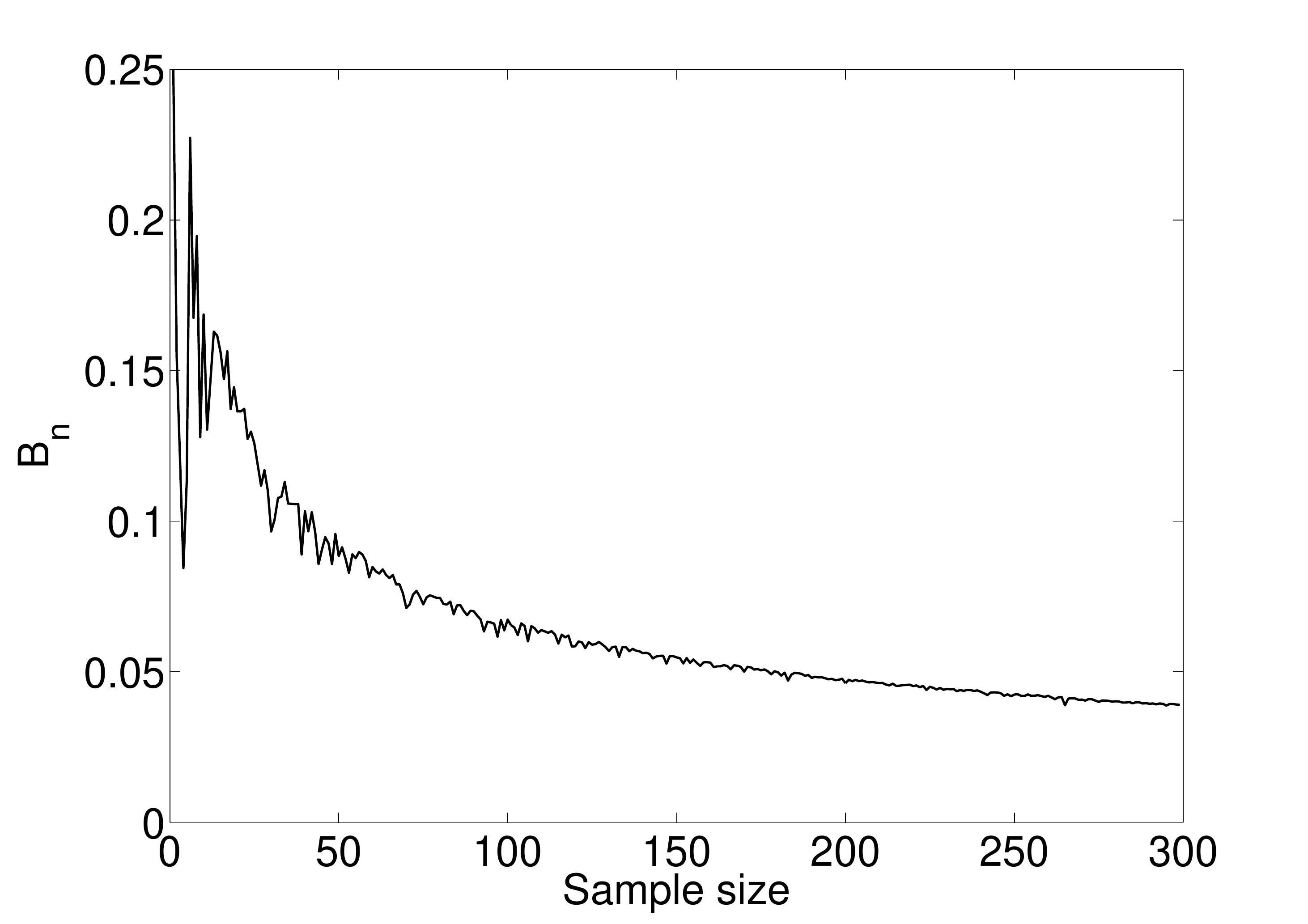}}
\\
\subfigure[$I_n$ when inputs are $x_{i,n}$.]{%
\includegraphics[height=0.3\textwidth]{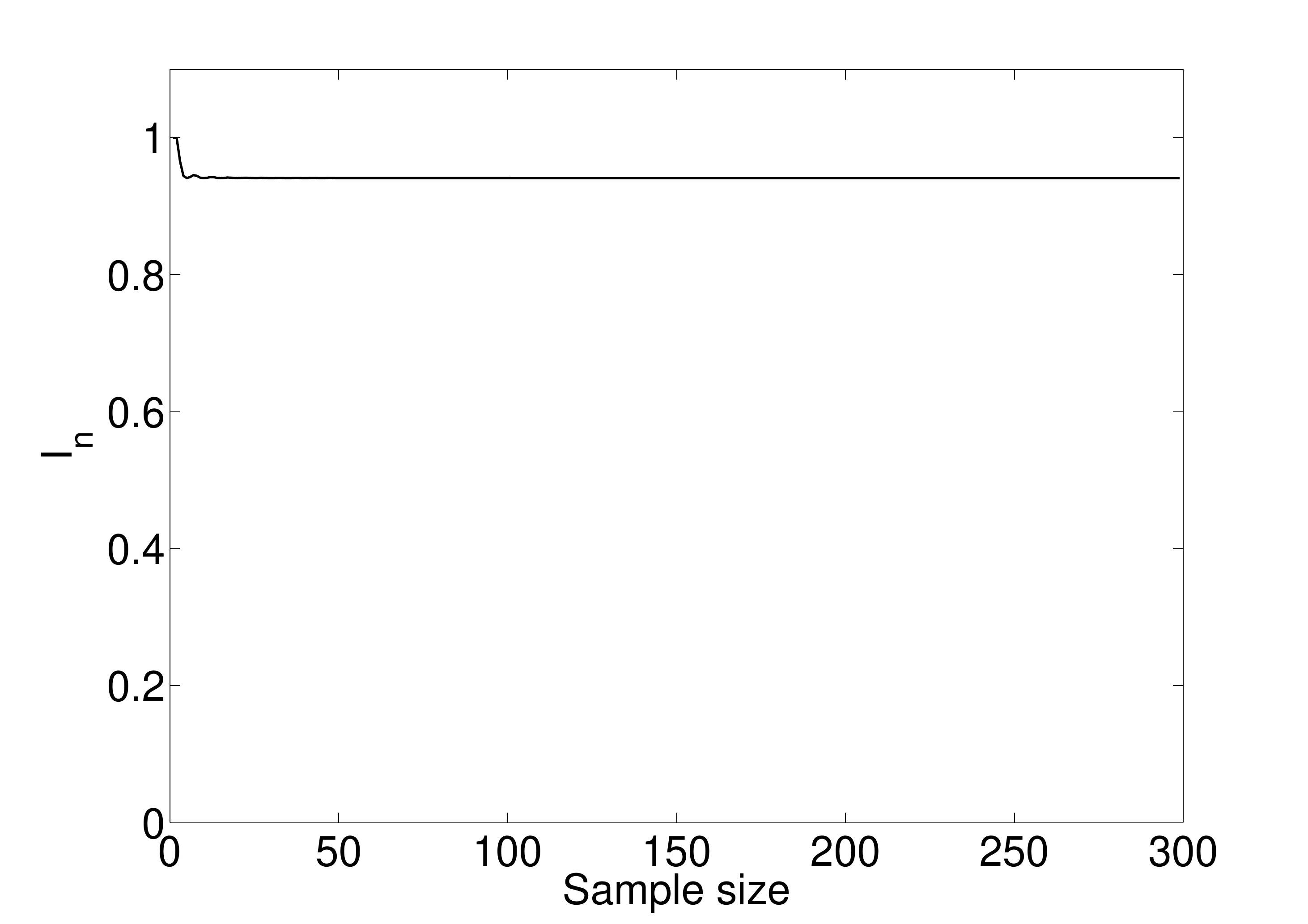}}
\hfill
\subfigure[$B_n$ when inputs are $x_{i,n}$.]{%
\includegraphics[height=0.3\textwidth]{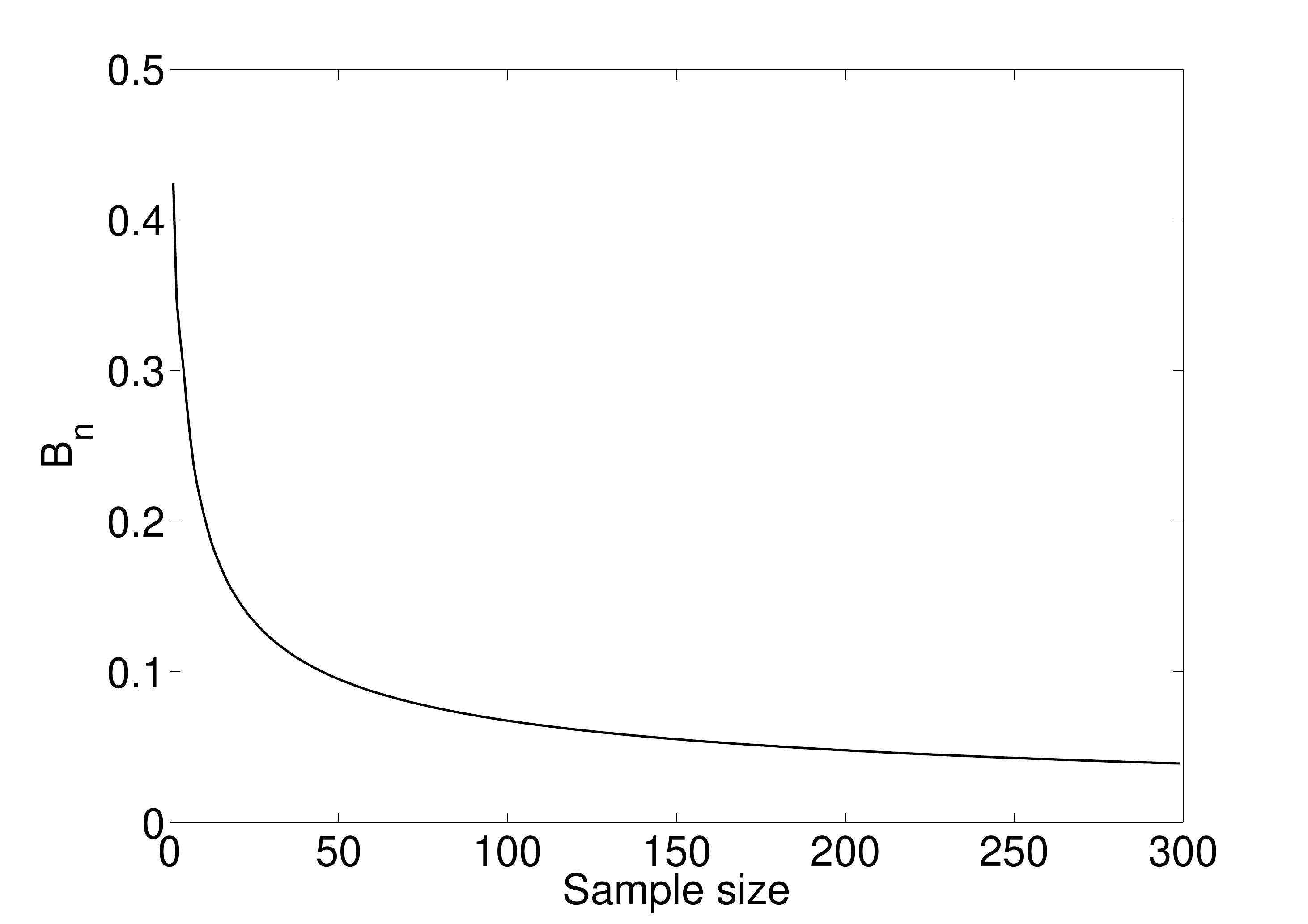}}
\caption{$I_n$ and $B_n$ with respect to the sample size $1\le n \le 300$ when $h(x)=1-(x-4/5)^2$ on $[0,8/5]$, and non-compromised outputs.}
\label{fig-pure-deter}
\end{figure}
is based on the same numerical example but without intrusion variables added to the outputs. Using the rule of thumb we confidently conclude that the system is not compromised. Indeed, the sequence $I_n$ rapidly converges to a limit different from $1/2$, and $B_n$ is clearly  asymptotically bounded. It is interesting to compare the cases of randomly (the two top panels of Figure~\ref{fig-pure-deter}) and deterministically (the two bottom panels) uniform inputs. In particular, using even very small sample sizes, the deterministic case allows us to very quickly conclude the absence of intrusion variables in the control system.

\section{Concluding notes}
\label{conclude}

We have proposed, justified, and numerically illustrated a rule of thumb for deciding whether or not a control system is being comprised. The rule is easy to implement, and we have discussed its performance under scenarios with random and deterministic inputs. The latter ones are particularly useful for speedy system's performance and vulnerability checks. The rule of thumb has been supported by rigorous theoretical considerations, which not only make up a solid foundation for the rule but also gives rise to the possibility for extending its use beyond what we have described in the present paper.

\section*{Acknowledgement}

Research of the second author has been supported by the Natural Sciences and Engineering Research Council of Canada.

\end{document}